\newtheorem{theorem}{Theorem}
\newtheorem{corollary}{Corollary}
\newtheorem{definition}{Definition}
\newtheorem{remark}{Remark}
\newtheorem{example}{Example}
\newenvironment{proof}{{\bf ~Proof}.\enspace}%
{$\Box$}
\def\impact#1{\textcolor{black}{#1}}
\def\jo#1{\textcolor{black}{#1}}
\def\joo#1{\textcolor{black}{#1}}
\newcommand{\V}{V}
\newcommand{\X}{X}
\newcommand{\A}{A}
\newcommand{\E}{E}
\newcommand{\cP}{\mathcal{P}}
\newcommand{\inv}{\mathrm{inv}}
\newcommand{\Inv}{\mathrm{Inv}}
\newcommand{\Init}{\mathrm{Init}}
\newcommand{\Act}{\mathrm{Act}}
\newcommand{\Flow}{\mathrm{Flow}}
\newcommand{\Pre}{\mathit{Pre}}
\newcommand{\pre}{\mathit{pre}}
\renewcommand{\Reset}{\mathrm{Set}}  
\newcommand{\prob}{\mathit{prob}}
\newcommand{\Prob}{\mathit{Prob}}
\newcommand{\post}{\mathsf{post}}
\newcommand{\pos}{\mathit{pos}}
\newcommand{\Pos}{\mathit{Pos}}
\renewcommand{\d}{\mathsf{d}}
\newcommand{\Dist}{\mathrm{Dist}}
\newcommand{\supp}{\mathrm{supp}}
\newcommand{\Op}{\mathit{O}}
\newcommand{\OFF}{\mathit{OFF}}
\begin{document}

\title{Analysis of Non-Linear Probabilistic Hybrid Systems}
\def\titlerunning{Analysis of Non-Linear Probabilistic Hybrid Systems}
\def\authorrunning{Assouramou, Desharnais}

\author{
Joseph Assouramou\thanks{Research supported by NSERC}
       \institute{Universit\'e Laval, Quebec, Canada}
       \email{Joseph.Assouramou.1@ulaval.ca}
\and
Jos\'ee Desharnais$^*$
       \institute{Universit\'e Laval, Quebec, Canada}
       \email{Josee.Desharnais@ift.ulaval.ca}
}
\maketitle
\begin{abstract}
This paper shows how to compute, for probabilistic hybrid systems, the clock approximation and linear phase-portrait approximation that have been proposed for non probabilistic processes by
 Henzinger et al.  The techniques permit to define a rectangular probabilistic process from a non rectangular one, hence allowing the model-checking of any class of systems. Clock approximation, which applies under some restrictions, aims at replacing a non rectangular variable   by a clock variable.     Linear phase-approximation applies without restriction and yields an approximation that simulates the original process.
 The conditions that we need for probabilistic processes are the same as those for the classic case.
 \end{abstract}
\textbf{Keywords:} probabilistic hybrid systems, model-checking, linear and rectangular processes,
approximation,
linearization 
\section{Introduction}
Hybrid processes are a combination of a process   that evolves continuously with time and of a discrete component.  A typical example is  a physical system, such as a heating unit, that is controlled~by a monitor. There are discrete changes of modes, like turning on and off the unit, and there is a continuous evolution --  the change in temperature.
Because of their continuous nature, model-checking  hybrid systems can only be done for sub-classes of them. Especially, the largest class for which verification is decidable is the class of rectangular hybrid automata~\cite{Isa06,Hen99}. Another such class for which verification is decidable is that of o-minimal hybrid automata~\cite{Lafferriere}, which models hybrid systems whose relevant sets and continuous behavior are definable in an o-minimal structure.  In the probabilistic case, Sproston proposed  methods to verify $\forall$-PBTL  on probabilistic rectangular and o-minimal hybrid processes~\cite{Spr01}. Probabilistic timed automata are also a subclass of such processes and have been analyzed extensively~\cite{Spr04,Spr00}.

In order to allow the verification of non-rectangular hybrid automata, two translation/approximation methods were proposed by Henzinger et al.~\cite{Sim90}: clock-translation and linear phase-portrait approximation. The idea behind those methods is to transfer the verification of any hybrid automaton to the one of a rectangular hybrid automaton which exhibits the same behaviour or over approximates it.
  In this paper, we show how to apply these methods to probabilistic hybrid processes.  We show that both methods apply with the same conditions as for the non deterministic case. The technique of approximation is based on replacing exact values by lower and upper bounds, after splitting the hybrid automaton for more precision in the approximation.  Hence, we also show how to split a  probabilistic hybrid automaton in order to obtain a bisimilar one.   Other side contributions of this paper are: a slightly more general, yet a simpler definition of probabilistic automata than the one proposed by Sproston~\cite{Spr01}; and the  description, in the next background section, of the two translation techniques in a simpler way than  what can be found in~\cite{Sim90,Sim92}, mostly because we take advantage of the fact that the definition of hybrid automata has been simplified since then,  \jo{being presented in terms of functions instead of predicates, and being slightly less general than in the original paper~\cite{Sim92}.}

\section{Transformation methods for hybrid automata}
In this section, we describe the two methods presented by Henzinger et al~\cite{Sim90} that will permit  the verification of safety properties on any hybrid system.

  \label{s:def_AH}
     Let $X=\{x_1,\ldots,x_n\}$ be a finite set of real variables; we write $\dot{X}=\{\dot{x}_1, \ldots, \dot{x}_n\}$ where $\dot{x}_i=\frac{dx_i}{dt}$ is the first derivative of $x_i$ with respect to time.
     The set of predicates on $\dot{\X}\cup \X$ is denoted $G(\dot{\X} \cup \X)$.  \jo{The set of \emph{valuations} $\mathsf{a}:X\to \mathbb{R}$ is written $\mathbb{R}^X$ or $\mathbb{R}^n$. 
      A set $U\subseteq \mathbb{R}^X$ is \emph{rectangular} if there exists a family of (possibly unbounded) intervals $(I_x)_{x\in X}$ with rational endpoints  such that $U=\{\mathsf{a}\in \mathbb{R}^n\mid \mathsf{a}(x)\in    I_x \mbox{ for all } x\in X\}$.  We denote by $R(X)$ the set of rectangles over $X$.  For any set $Y$, we write $\cP(Y)$  (resp.\ $\cP_{\mathit{fin}}(Y)$) for the power set of $Y$ (resp.\  finite power set of $Y$).
} For any variable $x$, belonging to $X$ or not, we write  $\mathsf{a}[x\mapsto r]$ for the valuation that maps $x$ to
$r\in \mathbb{R}$  and agrees with $\mathsf{a}$ elsewhere.  Conversely, if $X'\subseteq X$, we write $\mathsf{a}|_{X'}$ for the restriction of $\mathsf{a}$ to $X'$. In the following, we use the notation $\Reset$ instead of the usual slightly misleading one: $\mathrm{Reset}$.

\begin{definition}\cite{alur2000}\label{defHA}
 $H=(\V, \X, \Init, \Act, \Inv, \Flow,\E$, $\Pre, \Reset)$   is    a \emph{hybrid automaton (HA)} if
\begin{itemize}\addtolength{\itemsep}{-4pt}
\item $\V$ is a finite set of locations or control modes;
\item   $\X=\{x_1,x_2,\ldots, x_n\}$ is a set of $n$ continuous variables;
\item   $\Inv:{\V} \rightarrow{\cP({\mathbb{R}^{X}})}$ defines invariants for the variables in each location.
\item  $\Init: V \rightarrow\cP({\mathbb{R}^{X}})$ defines initial states  \jo{and satisfies $\Init(v) \subseteq \Inv(v)$ for all $v\in V$}.
\item   $\Act$ is a finite set of actions, possibly including a silent one, $\tau$;
\item   $\Flow:\!\V\!\rightarrow{G(\dot{\X}\! \cup \!\X)}$
    is a flow evolution condition;
\item   $\E \subseteq \V\times \Act \times \V$  is a finite set of discrete transitions;
\item   $\Pre: \E  \rightarrow\cP({\mathbb{R}^{X}})$  maps to every discrete transition  a set of preconditions;
\item   $\Reset: \E \times {\mathbb{R}^{X}} \rightarrow\cP({\mathbb{R}^{X}})$ describes change in values of variables resulting from taking  edges.    
We write $\Reset^x(e):= \{\mathsf{d}(x)\mid\exists \mathsf{a}\in \mathbb{R}^X.\mathsf{d}\in\Reset(e,\mathsf{a})\}$.

\end{itemize}

  \jo{   $H$ is said to be \emph{rectangular} if the  image of $\Inv$, $\Pre$ and $\Reset$ are included in $R(X)$ and
$\Flow(v)=\displaystyle\wedge_{x\in X}\ \dot x\in I_x$ where each $I_x \subseteq \mathbb{R}$ is a (possibly unbounded) interval with rational endpoints.}
\end{definition}


The semantics of $H$ is a labelled transition system: the set of states is $S_H:=\{(v,\mathsf{a})\mid\mathsf{a}\in \Inv(v)\}$.
There~are two kinds of transitions between states: flow transitions and discrete transitions. In a flow transition, the mode of the automaton is fixed and only the variables' values change over time. More formally, there is  a flow transition of duration  $\sigma \in \mathbb{R}_{\geq0}$ between states $(v,\mathsf{a})$ and $(v,\mathsf{a}')$,  written  $(v,\mathsf{a})\overset{\sigma}\rightarrow{(v,\mathsf{a}')}$, if either
 (1) $\sigma=0$ and $\mathsf{a}=\mathsf{a}'$ or (2) $\sigma>0$ and there exists a differentiable function $\gamma: [0;\sigma]\rightarrow{\mathbb{R}^n}$ with $\dot{\gamma}: (0;\sigma)\rightarrow{\mathbb{R}^n}$ such that $\gamma$ is a solution of $\Flow(v)$ with $\gamma(0)=\mathsf{a}$, $\gamma(\sigma)=\mathsf{a}'$, and $\gamma(\epsilon) \in \textrm{Inv}(v)$ for all $\epsilon \in (0;\sigma)$.
 For discrete transitions, the control mode of the automaton changes instantaneously.
 We write $(v,\mathsf{a}) \overset{a}\rightarrow{(v',\mathsf{a}')}$, if there exists $e=(v,a,v') \in E$ such that $\mathsf{a} \in \Pre(e)$ 
 and $\mathsf{a}' \in \Reset(e,\mathsf{a})$.

%

   \begin{example}
   Figure~\ref{therm} shows the graphical representation of a thermostat%
~\cite{Sim92} that  controls the variation of  temperature in a room through a radiator. 
The whole system has three modes representing that the radiator is either on,  off, or down;  \jo{there is one initial state, where the radiator is on and the temperature has value 2}. When the radiator is on the temperature increases with respect to the equation $\dot{x}=-x + 5$ whereas it decreases with respect to  $\dot{x}=-x$ when the radiator is off. When the whole system is down, no variation of the temperature is modeled. The values of the temperature evolve in the {range $[1;3]$.}
\impact{The radiator must switch off when it is on and the temperature reaches $3$ units and on  when it is off and the temperature is $1$}. Finally,  when we try to turn on the radiator, it might turn on or down.

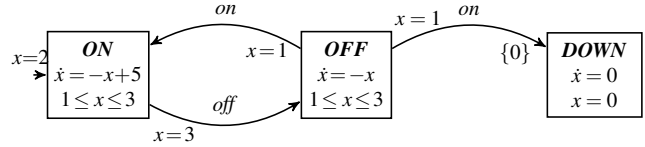
\begin{figure}
\begin{center}
  \begin{tikzpicture}[->,>=stealth',shorten >=0.3pt,auto,node distance=3.3cm,
                    semithick]
  \tikzstyle{every state}=[rectangle,fill=none,draw=black,text=black, minimum width=0em]
\node[state, draw=none] (R) { };
  \node[state,  node distance=1cm] (A)  [right of=R]
  {\begin{minipage}{1.09cm}  \begin{center}  \scriptsize{\emph{\textbf{ON}}\\  ${\!\dot{x}=\!-x\!+\!5\!}$\\ ${1 \!\leq x\! \leq\!3}$} \end{center}
\end{minipage}};

  \node[state]         (B) [right  of=A]      {\begin{minipage}{0.94cm}\begin{center}  \scriptsize{\emph{\textbf{OFF}}\\${\dot{x}\!=\!-x}$\\ ${\!1 \!\leq x\! \leq\!  3}$}
\end{center}\end{minipage}};

  \node[state]         (C) [right  of=B]      {\begin{minipage}{1cm}\begin{center}  \scriptsize{\emph{\textbf{DOWN}}\\$\dot{x}=0$\\ $x=0$}
\end{center}\end{minipage}};

  \path   (R) edge  node[pos=-0.1,above] {\scriptsize{$\!x\!\!=\!\!2$}}(A)
  (A) edge     [bend right]     node[pos=0.17, below ] {\scriptsize{$\!x\!=\!3$}}  node[pos=0.9, below , sloped ] {
  }    node[pos=0.5, above , sloped ] {\scriptsize{\emph{off}}} node[pos=0.5, below , sloped ] {}
  (B)
  (B) edge     [bend right]     node[pos=0.22, below  ] {\scriptsize{$\!x\!=\!1\!$}}  node[pos=0.9, below , sloped ] {
  }
      node[pos=0.5, above , sloped ] {\scriptsize{\emph{on}}} node[pos=0.5, below , sloped ] {}
      (A)
  (B)  edge[in=160,out=0, bend left]     node[pos=0.17, above  ] {\scriptsize{$x=1$}}         node[pos=0.5, below , sloped ] {} node[pos=0.8, below  ] {\scriptsize{$\{0\}$}} node[pos=0.5, above , sloped ]{\scriptsize{\emph{on}}}
   (C);
     \end{tikzpicture}
\end{center}
\caption{A graphical representation of the thermostat hybrid automaton}\label{therm} 
\end{figure}

   \end{example}

Because we need a notion of weak simulation, we define weak transitions through stuttering.   \jo{Hence, let $\tau$ be the (usual) silent action.  We write} $s \overset{a}\twoheadrightarrow{s'}$ if there exists a finite sequence  $s\overset{\tau}\rightarrow{s_1}\overset{\tau}\rightarrow\ldots\overset{\tau}\rightarrow{s_k}\overset{a}\rightarrow{s'}$. Similarly, we write $s\overset{\sigma}\twoheadrightarrow{s'}$ if there exists a finite sequence  $s\overset{\sigma_1}\rightarrow{s_1}\overset{\tau}\rightarrow{s_2}\overset{\sigma_2}\rightarrow{}\ldots\overset{\tau}\rightarrow{s_{k}}\overset{\sigma_{k}}\rightarrow{s'}$ such that  $\sum_i{\sigma_i}\!=\!\sigma\in\mathbb{R}_{\geq 0}$.
\emph{Simulation} and \emph{bisimulation}  are defined on the underlying infinite transition system (and are rather called \emph{time bi/simulation in Henzinger et al.~\cite{Sim90}}).

   \begin{definition} \cite{Sim90}
     \label{def_relation_temps_simulation}
Let $H$ and $H'$ be two hybrid automata. A relation $ \preceq \subseteq S_{H}\times S_{H'}$ is a \emph{simulation} of $H$ by $H'$  \jo{if every initial state  of $H$ is related by $\preceq$ to an initial state of   $H'$
and if whenever $s \preceq s'$, then for each $a\in \Act\setminus\{\tau\}\cup\mathbb{R}_{\geq 0}$ and each transition 
     $s \overset{a}\twoheadrightarrow{s_1}$, there exists a transition $s' \overset{a} \twoheadrightarrow{{s'}_1}$ such that ${s}_1 \preceq {{s}'_1}$}. If $\preceq^{-1}$  is also a simulation, then $\preceq$ is called a bisimulation. If there is a simulation between  $H$ and $H'$ (resp.\  a bisimulation), we write $H \preceq H'$ (resp.\ $H \equiv H'$).
           \end{definition}

\subsection{Clock-translation}\label{sec:clock}
  The \emph{clock-translation} method is based on the substitution of non-rectangular variables by clocks.
  Let $H$ be a non-rectangular hybrid automaton. The substitution of a variable $x$ of $H$ by a clock $t_x$ is possible only if, at any time,
   the value of $t_x$ can be determined by the one of $x$ (i.e., $x$ is solvable).

\subsubsection{Preliminaries}
 We say that a predicate is \emph{simple} if it is a  positive boolean combination of predicates of  the form $x\sim c$  where $c \in \mathbb{R}$ and $\sim \in \{<,\leq,=,\geq,>\}$.
 We say that     $x$ is  \emph{solvable} in $H$ if
\begin{itemize}
\item  every initial condition, invariant condition, and precondition  of $H$ defines {simple} predicates for $x$
and each flow condition of $x$ in $\Flow(v)$   
has the form $(\dot x = f^v_x(x))\wedge P_x$,
where $P_x$ is a simple predicate on $x$;
flow evolutions of other variables must not depend on $x$ nor $\dot x$;
\item   
the initial-value problem $\dot y (t)=f^v_x(y(t))$; $y(0)=c$ has a  unique, continuous and strictly monotone solution   $g_c$;

\item 
 $H$ is initialised with respect to $x$.   That is, for any transition $e\in E$,  $x$ must either stay unchanged in any valuation or get  \jo{assigned} only one value $r$ for all valuations;  this will happen if $\Reset^x(e)$ is a singleton  \jo{with the help of the following notation}:
 we will write
 $$\Reset^x(e)=\{r\}\subseteq {\mathbb{R}_*\!:=\! \mathbb{R}\cup \{*\}},$$  where $r$ is either the unique value $r \in \mathbb{R}$, in which case we say that $x$ is reset to $r$ by $e$, or a  special character, $*$, which will represent stability in the value of $x$.  If $r=*$ we must also have that  $f^v_x=f^{v'}_x$. 

\end{itemize}


\begin{example}
The thermostat automaton  of Figure~\ref{therm} is \emph{solvable} as the flow evolution equation of  variable $x$ is solvable in all the modes: in mode ON, the differential equation $\dot{x}=-x+5$ with the initial condition $x(0)=2$ has the function $x(t)=-3e^{-t}+5$ where $t \in \mathbb{R}_{+}$ as solution.
\end{example}

  Suppose that $x \in \X$ is solvable in the hybrid automaton $H=(\V, \X, \Init, \Act, \Inv, \Flow, \E, \Pre, \Reset)$, and let $c \in \mathbb{R}$ be a constant. We say that $c$ is a  \emph{starting value} for a variable $x$ if 
  $c$ is either:
the initial value of $x$ in some mode $v$, that is, $c=\mathsf{a}(x)$ for $\mathsf{a}\in \Init(v)$; or
the unique value of $\Reset^x(e)$ for some edge $e\in E$ if this value is not $*$.
   Let $D_v(x)$ 
   be the finite set of starting values of  $x$ in $v$.  
  \paragraph{Transformation from $x\sim l$ to $t_x \sim' g^{-1}_c(l)$.}
  To simplify the presentation below, we show how predicates on $x$ are transformed into predicates on $t_x$~\cite{Sim90}.

        Let $g_c(t)$ be the unique solution of the initial-value problem  $\dot y (t)=f^v_x(y(t)); y(0)=c$, where $c \in \mathbb{R}$. As $g_c(t)$ is strictly monotone, there exists at most one $t \in \mathbb{R}_{+}$ such that $g_c(t)=l$ for each $l \in \mathbb{R}$. Let $g_c^{-1}(l)=t$ if $g_c(t)=l$ and $g_c^{-1}(l)=-$ if $g_c(t)\neq l$ for all $t \in \mathbb{R}_{+}$. Let $\Op:=\{<,\leq,=,\geq, >\}$. The transformation from simple atomic predicates over $\{x\}$ to simple atomic predicates over $\{t_x\}$ is the function $\alpha_c$ defined using $\sim \in \Op$, $lt:\Op \rightarrow \Op$ and $gt:\Op \rightarrow \Op$, as follows:\\[5pt]
\phantom{mmm}
\begin{minipage}{10cm}
  ${ \alpha_c(x\sim l)\!=\!\begin{cases} \mathit{true}& \text{if}\ g_c^{-1}(l)=- \ \text{and}\ c\sim l.\\
  \mathit{false}& \text{if}\ g_c^{-1}(l)=- \ \text{and}\ c\nsim l.\\
  t_x \ lt(\sim)\  g_c^{-1}(l)& \text{if}\ g_c^{-1}(l)\neq- \ \text{and}\ c\sim l.\\
   t_x\  gt(\sim)\  g_c^{-1}(l)& \text{if}\ g_c^{-1}(l)\neq- \ \text{and}\ c\nsim l.
   \end{cases}}$
\end{minipage}
\begin{minipage}{4cm}
\begin{tabular}{|c|c|c|}
  \hline
  \textbf{$\sim$} & \textbf{$lt(\sim)$} & \textbf{$gt(\sim)$} \\\hline
  $<$ & $<$ & $>$ \\\hline
  $\leq$ & $\leq$ & $\geq$ \\\hline
  $=$ & $=$ &  $=$\\\hline
  $\geq$ & $\leq$ & $\geq$ \\\hline
   $>$& $<$ & $>$ \\
  \hline
\end{tabular}
\end{minipage}

For each $(v,c_i)$ of the hybrid automaton, every predicate $x \sim l$ is replaced by the predicate $\alpha_{c_i}(x\sim l)$, except the invariant predicate which is replaced by $\alpha_{c_i}(x\sim l)$ if $c_i \sim l$, and by \emph{false} otherwise($(v, c_i)$ may be removed in the latter case).

\subsubsection{Clock-translation}
We are now ready to define clock-translation.
\begin{definition}\cite{Sim90}
   If $x \in \X$ is solvable in  ${H=(\V, \X,\Init, \Act,\Inv, \Flow, \E, \Pre, \Reset)},$ then the  clock-translation of $H$  with respect to $x$ is
$$T\!=\!(\V_T, \X_T, \Init_T, \Act, \Inv_T, \Flow_T, \E_T, \Pre_T, \Reset_T),$$ the hybrid system
  obtained from the following algorithm:\\[5pt]
     Step 1: adding the clock $t_x$.
\begin{itemize}\addtolength{\itemsep}{-4pt}
\item $\V_T:=\cup_{v\in V} \{v\}\times D_v(x)$, that is, each mode $v$ of $H$ is split.  $\X_T:= \X\cup\{t_x\}$.
            \item 
               \jo{$\Init_T(v,c):=\{\mathsf{a}[t_x\mapsto 0]\mid \mathsf{a}\in \Init(v) \mbox{ and  } \mathsf{a}(x)=c \}$.}
            \item $\E_T$ contains two kinds of control switches; for ${c \in D_v(x)}$ and $e=(v,a,v')\in \E$
\\[5pt]--  if \ $\Reset^x(e)=\{r\} \subseteq \mathbb{R}$,
$E_T$  contains the edge ${e_T\!:=\!((v,c),a, (v',r))}$, with $\Pre_T(e_T)\!:=\!\Pre(e)$
and  ${\Reset_T(e_T,\mathsf{a}):=\{ \mathsf{d}[t_x\mapsto 0]\mid \mathsf{d}\in \Reset(e,\mathsf{a}|_X)\}}$.
\\[5pt]--   if \   ${\Reset^x(e)=\{*\}}$, $E_T$ contains the edge $e_T:=((v,c),a, (v',c))$ with $\Pre_T(e_T):=\Pre(e)$ and $\Reset_T(e_T,\mathsf{a}):=\{\mathsf{d}[t_x\!\mapsto \!\mathsf{a}(t_x)]\!\mid \! \mathsf{d}\!\in \!\Reset(e,\mathsf{a}|_X)\}$.
            \end{itemize}

\noindent     Step 2: {moving to conditions on $t_x$.
We view the images of $\Init$, $\Inv$, $\Pre, \Reset$ and $\Flow$  as predicates instead of sets of valuations.  We replace these predicates  over $x$ of the form $x \sim r$ where $ \sim \,\in \!\{<,\leq, =, >,\geq\}$ and $r \in \mathbb{R}$ in $T$ by predicates over $t_x$ of the form $t_x \sim' g^{-1}_c(r)$ as described above.  Finally, the variable $x$ can be removed from $\X_T$.
      }
\end{definition}

 \begin{example}
   The timed automaton of  Figure~\ref{aut_Thermostat_CT} is  obtained by applying the clock-translation on the thermostat automaton. Each mode $v$ of the automaton is split into $|D_v(x)|$ modes. Since ${D_{\mathit{ON}}(x)=\{1, 2\}}$, we get the  modes $(\mathit{ON},\,1)$ and $(\mathit{ON},2)$. For these two modes, the differential equations are respectively "$\dot{x}\!\!=\!\!-x+5$,\ where\ $x(0)\!\!=\!\!2$" and "$\dot{x}\!\!=\!\!-x+5$, where $x(0)\!\!=\!\!1$", and then we have the solutions "$x(t)\!\!=\!\!-3e^{-t}+5$" and "$x(t)\!\!=\!\!-4e^{-t}+5$" respectively. Next, we substitute the variable $x$ by the clock $t_x$ in the preconditions, and the invariants. Then, the constraint $x\leq 3$ becomes  $t\leq \ln(2)$ and $t\leq \ln(\frac{3}{2})$ respectively.
   The two automata are bisimilar, by the following theorem.

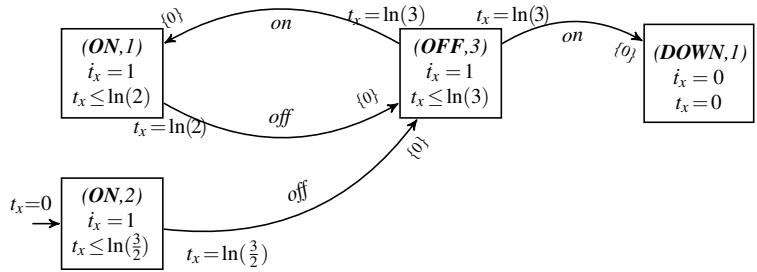
\begin{figure}
\begin{center}
  \begin{tikzpicture}[->,>=stealth',shorten >=0.3pt,auto,node distance=3.3cm,
                    semithick]
  \tikzstyle{every state}=[rectangle,fill=none,draw=black,text=black, minimum width=0em]

  \node[state,  node distance=0.3cm] (A)                    {\begin{minipage}{1.09cm}  \begin{center} \scriptsize{\emph{(\textbf{ON},1)}\\  $\dot{t}_x=\!1$\\ ${t_x\! \leq \!\ln(2)}$} \end{center}
\end{minipage}};

 \node[state, node distance=2cm] (D) [below of=A]               {\begin{minipage}{1.09cm}  \begin{center}  \scriptsize{\emph{(\textbf{ON},2)}\\  $\dot{t}_x=\!1$\\ ${\!t_x\! \leq \! \ln(\!\frac{3}{2}\!)}$} \end{center}
\end{minipage}};
  \node[state, draw=none, node distance = 1.2cm] (R) [left of=D] { };
  \node[state, node distance=4.5cm]         (B) [right  of=A]      {\begin{minipage}{1.09cm}\begin{center}  \scriptsize{\emph{(\textbf{OFF},3)}\\$\dot{t}_x=\!1$\\ ${t_x\! \leq\! \ln(3)}$}
\end{center}\end{minipage}};

  \node[state]         (C) [right  of=B]      {\begin{minipage}{1.2cm}\begin{center}  \scriptsize{\emph{(\textbf{DOWN},1)}\\$\dot{t}_x=0$\\ $t_x\!=\!0$}
\end{center}\end{minipage}};

  \path   (R) edge  node[pos=0,above] {\scriptsize{$\!t_x\!\!=\!\!0$}}(D)
  (A) edge     [bend right]     node[pos=0.03, below] {\scriptsize{$\!t_x\!=\!\ln(\!2 \!)\!$}}
    node[pos=0.5, above , sloped ] {\scriptsize{\emph{off}}} node[pos=0.5, below ] {}node[pos=0.9, above , sloped ] {\tiny{$\{0\}$}}(B)

  (D) edge     [bend right]     node[pos=0.21, below ] {\scriptsize{$t_x\!=\! \ln(\frac{3}{2})$}}
  node[pos=0.5, above , sloped ] {\scriptsize{\emph{off}}} node[pos=0.5, below , sloped ] {}node[pos=0.93, below , sloped ] {\tiny{$\{0\}$}}(B)

  (B) edge     [bend right]     node[pos=0.05, above ] {\scriptsize{$t_x\!=\!\ln(3)$}}  node[pos=0.93, below , sloped ] {
  }
  node[pos=0.5, below , sloped ] {\scriptsize{\emph{on}}} node[pos=0.5, below , sloped ] {}node[pos=0.93, above , sloped ] {\tiny{$\{0\}$}}(A)

  (B)  edge[in=160,out=0, bend left]     node[pos=0.09, above ] {\scriptsize{$\!t_x\!=\!\ln(\!3\!)$}}         node[pos=0.5, below , sloped ] {} node[pos=0.93, below , sloped ] {\tiny{$\{0\}$}} node[pos=0.5, below , sloped ]{\scriptsize{\emph{on}}}
  (C);
     \end{tikzpicture}
\end{center}

    \caption{The clock-translation of the thermostat automaton}\label{aut_Thermostat_CT}
\end{figure}

\end{example}


\begin{theorem}\cite{Sim90}
   \label{theorem_bis_trans}
$H$ is bisimilar to its clock-translation  $T$.  The relation is given by the graph of the projection $\eta:S_T \rightarrow{S_H}$, defined as $\eta((v,c),\mathsf{a})\!:=\!(v,\mathsf{a}')$, where   $\mathsf{a}'$ satisfies  $\mathsf{a}'|_X=\mathsf{a}|_X$ and  $\mathsf{a}'(x)\!\!=\!\!g_c(\mathsf{a}(t_x))$ where $g_c$ is the solution of the initial-value problem $[{\dot y (t)=f^v_x(y(t)); y(0)=c}]$.
   \end{theorem}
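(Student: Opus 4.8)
The plan is to verify that the graph of the projection $\eta$ and its inverse are both simulations, so that this relation is a bisimulation between $T$ and $H$. A preliminary observation streamlines everything: the clock-translation turns each edge $e=(v,a,v')$ of $H$ into edges carrying the \emph{same} action $a$ and introduces no fresh silent moves, so $\tau$-edges of $T$ correspond exactly to $\tau$-edges of $H$. Hence the weak relations $\twoheadrightarrow$ collapse to the strong ones for matching purposes, and it suffices to match single flow steps and single discrete steps in each direction, the weak versions following by concatenation. Throughout, two states are related exactly when $\eta((v,c),\mathsf{a})=(v,\mathsf{a}')$, i.e.\ $\mathsf{a}'$ agrees with $\mathsf{a}$ off $x$ and $\mathsf{a}'(x)=g_c(\mathsf{a}(t_x))$.

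Everything rests on two facts that I would isolate first. (i) \emph{Correctness of the predicate translation:} for any starting value $c$, any simple atomic predicate $x\sim l$, and any $t\geq 0$, the value $g_c(t)$ satisfies $x\sim l$ if and only if $t$ satisfies $\alpha_c(x\sim l)$. This is a case analysis on whether $l$ lies on the trajectory ($g_c^{-1}(l)=-$ or not) and on the monotonicity direction of $g_c$, which is precisely what the sign test $c\sim l$ versus $c\nsim l$ detects and what the $lt/gt$ tables compensate for; it lifts to positive Boolean combinations, hence to the full invariant, initial, and precondition predicates on $x$. (ii) \emph{Time-invariance of the autonomous flow:} since $g_c$ is the unique solution of $\dot y=f^v_x(y)$, $y(0)=c$, the shifted map $s\mapsto g_c(t_0+s)$ is the unique solution with initial value $g_c(t_0)$, so if $x=g_c(t_0)$ at the start of a flow of duration $\sigma$ in mode $v$ then $x=g_c(t_0+\sigma)$ at its end, while $t_x$ advances from $t_0$ to $t_0+\sigma$.

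With these in hand I would check the three clauses. \emph{Initial states:} each $\mathsf{a}\in\Init(v)$ with $\mathsf{a}(x)=c$ corresponds to $((v,c),\mathsf{a}[t_x\mapsto 0])\in\Init_T(v,c)$, and $\eta$ returns it since $g_c(0)=c$; conversely every initial $T$-state projects into some $\Init(v)$, so the requirement holds both ways. \emph{Flow steps:} a flow $((v,c),\mathsf{a})\overset{\sigma}\rightarrow((v,c),\mathsf{a}^{\ast})$ advances the clock by $\sigma$ and evolves the other variables exactly as in $H$ (their flow does not depend on $x$, by solvability); by (ii) the $H$-trajectory of $x$ reaches $g_c(\mathsf{a}(t_x)+\sigma)$, so the endpoints remain $\eta$-related, and by (i) the invariant holds along one trajectory iff along the other, giving the matching flow on each side. \emph{Discrete steps:} for $e=(v,a,v')$ the translated guard is $\Pre_T(e_T)=\alpha_c(\Pre(e))$, so (i) makes the guards agree; in the reset case $\Reset^x(e)=\{r\}$ the target is $(v',r)$ with $t_x$ reset to $0$, whence $\eta$ gives $x=g_r(0)=r$, matching the reset in $H$ while the remaining variables are reset identically; in the stable case $\Reset^x(e)=\{*\}$ the target is $(v',c)$ with $t_x$ and the trajectory unchanged (legitimate because $f^v_x=f^{v'}_x$), so $\eta$ returns the same $x$-value.

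The main obstacle is fact (i): keeping the bookkeeping straight across the five comparison operators and the two monotonicity directions, and in particular justifying that the single Boolean test $c\sim l$ correctly recovers the sign of the slope of $g_c$ between time $0$ and the crossing time $g_c^{-1}(l)$ — this is where the asymmetry of the $lt/gt$ tables must be argued, rather than the flow-matching, which is an immediate consequence of ODE uniqueness. A secondary point to check is that the target $(v',c)$ of a stable ($*$) transition is a genuine mode of $T$, so that the $\eta$-image is always defined; this requires the starting-value sets to be propagated along stable edges, which is consistent with the solvability hypothesis $f^v_x=f^{v'}_x$.
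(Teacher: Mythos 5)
The paper does not prove this theorem: it is imported verbatim from Henzinger et al.\ \cite{Sim90}, so there is no in-paper argument to compare yours against. Your outline is nevertheless a faithful reconstruction of the standard proof from that source: reducing weak matching to strong matching (valid here, since clock-translation preserves action labels and adds no $\tau$-edges, unlike the splitting construction later in the paper), isolating the correctness of $\alpha_c$ and the time-shift property of the autonomous flow $g_c$, and then discharging initial states, flow steps and discrete steps in both directions. Two remarks. First, what you call fact~(i) is indeed the only nontrivial content, and your proposal only describes the case analysis rather than carrying it out; a complete proof would also have to handle the paper's special treatment of invariants (replaced by \emph{false} when $c\nsim l$, which is what guarantees the mode $(v,c)$ is entered only with a consistent starting value). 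Second, your observation that the target $(v',c)$ of a $*$-edge must actually belong to $\V_T$, i.e.\ that $c\in D_{v'}(x)$, is a genuine looseness in the paper's definition of starting values, which as written only collects initial values and non-$*$ reset values; the construction implicitly requires closing $D_{v'}(x)$ under propagation along stable edges, exactly as you say. Neither point invalidates your approach.
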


  As a corollary, $H$ and $T$ satisfy the  \jo{same properties of usual temporal logics}.


\subsection{Linear phase-portrait approximation}

  \label{approximation}
We now present the second method which allows the translation of any hybrid automaton into a rectangular one.
The linear phase-portrait approximation method can be applied to any hybrid automaton, yielding an \emph{approximation} of the original process which   simulates  the original automaton (instead of being bisimilar to it, as for clock-translation).
This implies  that if a safety property is verified on  the approximation, then it holds in the original system~\cite{Sim92}.

  The general method  is to first split the automaton and then approximate the result.  Approximation is done by replacing non-rectangular flow equations by lower and upper bounds on the variables, hence forgetting the true details of the equations.    
  By splitting more finely, one obtains a better approximation.

\subsubsection {Splitting a hybrid automaton}
 Let $H$  be a hybrid automaton with invariant function $\Inv$. 
A \emph{split function} is a map $\theta$  that returns to each mode $v$ of $H$ a finite open cover $\{\inv^{v}_1,\ldots,\inv^{v}_m\} \subseteq\cP({\mathbb{R}^{X}})$  of $\textrm{Inv}(v)$.   In splitting, a mode $v$ will be split into several modes according to the cover $\theta(v)$. The fact that  $\cup_{i}\inv^{v}_i=\Inv(v)$ makes sure that states are preserved whereas the evolution inside mode $v$ is preserved  through silent transitions between copies of $v$, which is possible because   $\theta(v)$'s components overlap.

\begin{definition}
 Let $H=(\V, \X, \Init, \Act, \Inv, \Flow, \E, \Pre, \Reset)$  be a hybrid automaton.   The split of $H$  by
  $\theta$ is the hybrid automaton
  ${\theta(H)=(\V_{\theta}, \X, \Init_{\theta}, \Act_{\theta},} $${\Inv_{\theta}, \Flow_{\theta}, \E_{\theta}, \Pre_{\theta}, \Reset_{\theta})}$ defined as:\vspace{-1mm}

   \begin{itemize}\addtolength{\itemsep}{-4pt}
   \item $\V_{\theta}= \{(v,i)\mid v \in \V\mbox{ and } 1\leq i \leq |\theta(v)|\}$
   \item $\Init_{\theta}((v,i))=\Init(v) \cap \inv^v_i$ 
   \item $\Act_{\theta}=\Act \cup \{\tau\}$
   \item $\Inv_{\theta}(v,i)=\inv^{v}_i$ 
   \item $\Flow_{\theta}(v,i)=\Flow(v)$
   \item ${\E_{\theta}=\E_1\cup \E_2}$, where $\E_1$ contains the control switch $((v,i),a,(v',j))$ for each $(v,a,v') \in \E$, whereas
     ${\E_2=\{((v,i),\tau,(v,j))\mid (v,i),(v,j)\in \V_{\theta} \}}$ allows the automaton to transit
     silently between the different copies of $v$.
   \item If ${e_{\theta}\!=\!((v,i),a,(v',j)) \in E_1}$, we set
   $\Pre_{\theta}(e_{\theta})\!\!=\!\!\Pre(v,a,v')$ and ${\Reset_{\theta}(e_{\theta},\mathsf{a})\!=\!\Reset((v,a,v'),\mathsf{a})}$.  If $e_{\theta}\!=\!((v,i),\tau,(v,j)) \in \E_2$, we set $\Pre_{\theta}(e_{\theta})=\mathbb{R}^X$ and $\Reset_{\theta}(e_{\theta},\mathsf{a})=\{\mathsf{a}\}$.
   \end{itemize}
\end{definition}

Note that the cover $\theta(v)$ need not really be open.  What is important is that the evolution within any mode be preserved, as pointed out in~\cite{Sim90}. This is the case in the following example, where components of the cover are closed and intersect in exactly one point,  which is sufficient to allow evolution.
  \begin{example}
   The automaton in Figure~\ref{aut_Thermostat_split} is a split of the thermostat automaton with function  $\theta(\mathit{ON})=\theta(\mathit{OFF})=\{{\{x\mid 1\leq x \leq2\}},$ ${ \{x\mid 2\leq x \leq 3\}\}}$, and   $\theta(\mathit{DOWN})\!=\!\{\{\!x\!\mid \!x\!=\!0\!\}\}$.     \joo{Note the silent transitions between states $((\mathit{ON},i),2)$, $i=1,2$, the latter being duplicates of the original thermostat's state $(\mathit{ON},\{x\mapsto 2\})$, that preserve the evolution within mode $\mathit{ON}$.}
  \end{example}
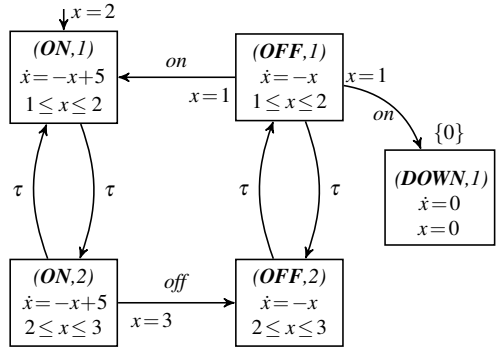
\begin{figure}
\centering
\begin{tikzpicture}[->,>=stealth',shorten >=0.1pt,auto,node distance=3cm,                    semithick]
  \tikzstyle{every state}=[rectangle, fill=none,draw=black,text=black, minimum width=3.7em]
  \node[state, draw=none] (R) {};
  \node[state,  node distance=14mm] (A)  [below of=R]  {\begin{minipage}{1.08cm}  \begin{center}\scriptsize{\emph{(\textbf{ON},1)}\\[1pt]  ${\!\dot{x}\!=\!-x\!+\!5}$\\[1pt]  ${1 \!\leq x\! \leq  2}$}\end{center}
\end{minipage}};

 \node[state, node distance=3cm] (D)  [below of=A]
 {\begin{minipage}{1.08cm}  \begin{center} \scriptsize{\emph{(\textbf{ON},2)}\\  ${\dot{x}\!=\!-x\!+\!5}$\\ ${2 \!\leq x\! \leq  3}$}\end{center}
\end{minipage}};

  \node[state]         (B) [right  of=A]      {\begin{minipage}{1.08cm}\begin{center}  \scriptsize{\emph{(\textbf{OFF},1)}\\${\dot{x}\!=\!-x}$\\ ${1 \!\leq x\! \leq\!  2}$}
\end{center}\end{minipage}};

 \node[state, node distance=3cm]         (E) [below  of=B]      {\begin{minipage}{1.08cm}\begin{center}  \scriptsize{\emph{(\textbf{OFF},2)}\\$\dot{x}\!=\!-x$\\ ${2 \!\leq x\! \leq\!  3}$}
\end{center}\end{minipage}};

  \node at (5,-3)[state]         (C)       {\begin{minipage}{1.2cm}\begin{center}  \scriptsize{\emph{(\textbf{DOWN},1)}\\$\dot{x}\!=\!0$\\ $\!x\!=\!0$}
\end{center}\end{minipage}};

  \path
  (R) edge    node[pos=0.1,right] {\scriptsize{$\!x=\!2$}}(A)
  (D) edge     [right]
  node[pos=0.3, below  ] {\scriptsize{$\!x\!=\!3$}}
  node[pos=0.9, below  ] { 
  }
  node[pos=0.5, above  ] {\scriptsize{\emph{off}}}
  node[pos=0.5, below  ] {}(E)
  (B) edge     [ right]
  node[pos=0.2, below  ] {\scriptsize{$\!x\!=\!1$}}
  node[pos=0.9, below  ] {  
  }
   node[pos=0.5, above  ] {\scriptsize{\emph{on}}}
   node[pos=0.5, below  ] {}(A)
  (B)  edge[in=160,out=0, bend left]
  node[pos=0.23, above ] {\scriptsize{$\!x\!=\!1$}}
     node[pos=1.05, above right] {\scriptsize{$\{0\}$}} node[pos=0.4, below  ]{\scriptsize{\emph{on}}}(C)

   (D) edge    [in=250,out=110]    
     node[pos=0.9, below , left ] {  
  }
      node[pos=0.5, above , left ] {\scriptsize{\emph{$\tau$}}}  node[pos=0.5, below , left ] {}(A)
   (A) edge      [in=70,out=290]    
     node[pos=0.9, below , right ] {  
  }
       node[pos=0.5, above,right  ] {\scriptsize{\emph{$\tau$}}}  node[pos=0.5, below  ] {}(D)

   (E) edge    [in=250,out=110]     
       node[pos=0.9, below , left ] {  
  }
      node[pos=0.5, above , left ] {\scriptsize{\emph{$\tau$}}} node[pos=0.5, below  ] {}(B)

    (B) edge     [in=70,out=290]      
       node[pos=0.9, below  ] {  
  }
      node[pos=0.5, above,right  ] {\scriptsize{\emph{$\tau$}}} node[pos=0.5, below ] {}(E)
  ;
 \end{tikzpicture}
    \caption{ A split  of the thermostat} \label{aut_Thermostat_split}
\end{figure}


%
%
%



\subsubsection{Approximating a hybrid automaton}
An (over) approximation of a HA is obtained by weakening all predicates of its evolution.
  \begin{definition}
Let $H=(\V, \X, \Init, \Act, \Inv,\Flow,\E,$ $\Pre, \Reset)$ be a HA.  Another hybrid automaton $A\!=\!(\V, \X, \Init_A,\Act, \Inv_A,$ $ \Flow_A, \E, \Pre_A, \Reset_A)$  is a \emph{basic approximation} of $H$ if:
  \begin{itemize}\addtolength{\itemsep}{-4pt}
  \item for all $v \in \V$, $\Inv(v)\Rightarrow{\Inv_A(v)}$, $\Flow(v)\wedge \Inv(v)\Rightarrow{\Flow_A(v)\wedge\Inv_A(v)}$, $\Init(v)\Rightarrow{\Init_A(v)}$;
  \item for every discrete transition  $e \in E$, $\Pre(e)\Rightarrow{\Pre_A(e)}$ and $\Reset(e)\Rightarrow{\Reset_A(e)}$;
  \end{itemize}
where sets of valuations are viewed as predicates.
  If there exists a split  $\theta$ on $H$ such that $A$ is a basic approximation of $H_\theta$ then $A$ is \emph{a phase-portrait approximation} of $H$.
  If the lower and upper bounds of all the predicates in $A$ are rational then $A$ is a (rational) \emph{linear} phase-portrait approximation of $H$.
\end{definition}

A straightforward linear phase-portrait approximation is obtained by replacing the invariant in each mode $v$ by a product of rational intervals that contains $\Inv(v)$ and  all  other predicates, including the flow evolution, by the rational lower and upper bounds implied by the invariant on $v$.


  \begin{example}
  The automaton of Figure~\ref{aut_Thermostat_PA} is the linear phase-portrait approximation of the thermostat  (with the same split as in  Figure~\ref{aut_Thermostat_split}).  Every predicate on $\dot x$ is replaced by a predicate that specifies lower and upper bounds on it.  For example,
  the approximation of $\dot{x}$ in mode ($\mathit{ON}$,1) 
  yields the set $\{\dot{x}\mid3 \leq \dot{x} \leq 4\}$.
  \end{example}

 \begin{figure}

      \centering
 \begin{tikzpicture}[->,>=stealth',shorten >=0.1pt,auto,node distance=3cm,
                    semithick]
  \tikzstyle{every state}=[rectangle, fill=none,draw=black,text=black, minimum width=3.7em]
  \node[state, draw=none] (R) {};
  \node[state,  node distance=14mm] (A)  [below of=R]                   {\begin{minipage}{1.08cm}  \begin{center}\scriptsize{\emph{(\textbf{ON},1)}\\[1pt]  ${3 \! \leq \! \dot{x}\! \leq \! 4}$\\[1pt]  ${1 \!\leq x\! \leq  2}$}\end{center}
\end{minipage}};

 \node[state, node distance=3cm] (D)  [below of=A]                   {\begin{minipage}{1.08cm}  \begin{center} \scriptsize{\emph{(\textbf{ON},2)}\\  ${2 \! \leq \! \dot{x}\! \leq \! 3}$\\ ${2 \!\leq x\! \leq  3}$}\end{center}
\end{minipage}};

  \node[state]         (B) [right  of=A]      {\begin{minipage}{1.29cm}\begin{center}  \scriptsize{\emph{(\textbf{OFF},1)}\\${\!\!-2\! \leq \! \dot{x}\!\leq\!-1\!}$\\ ${1 \!\leq x\! \leq\!  2}$}
\end{center}\end{minipage}};

 \node[state, node distance=3cm]         (E) [below  of=B]      {\begin{minipage}{1.27cm}\begin{center}  \scriptsize{\emph{(\textbf{OFF},2)}\\${\!-3\!\leq\!\dot{x}\!\leq\!-\!2\!}$\\ ${2 \!\leq x\! \leq\!  3}$}
\end{center}\end{minipage}};

  \node at (5,-3)[state]         (C)       {\begin{minipage}{1.2cm}\begin{center}  \scriptsize{\emph{(\textbf{DOWN},1)}\\$\dot{x}\!=\!0$\\ $\!x\!=\!0$}
\end{center}\end{minipage}};
  \path
  (R) edge    node[pos=0.1,right] {\scriptsize{$\!x=\!2$}}(A)
  (D) edge     [right]
  node[pos=0.3, below  ] {\scriptsize{$\!x\!=\!3$}}
  node[pos=0.9, below  ] { 
  }
  node[pos=0.5, above  ] {\scriptsize{\emph{off}}}
  node[pos=0.5, below  ] {}(E)
  (B) edge     [ right]
  node[pos=0.2, below  ] {\scriptsize{$\!x\!=\!1$}}
  node[pos=0.9, below  ] {  
  }
   node[pos=0.5, above  ] {\scriptsize{\emph{on}}}
   node[pos=0.5, below  ] {}(A)
  (B)  edge[in=160,out=0, bend left]
  node[pos=0.25, above ] {\scriptsize{$\!x\!=\!1$}}
     node[pos=1.05, above right] {\scriptsize{$\{0\}$}} node[pos=0.4, below  ]{\scriptsize{\emph{on}}}(C)

   (D) edge    [in=250,out=110]
     node[pos=0.9, below , left ] {
  }
      node[pos=0.5, above , left ] {\scriptsize{\emph{$\tau$}}}  node[pos=0.5, below , left ] {}(A)
   (A) edge      [in=70,out=290]
     node[pos=0.9, below , right ] {
  }
       node[pos=0.5, above,right  ] {\scriptsize{\emph{$\tau$}}}  node[pos=0.5, below  ] {}(D)

   (E) edge    [in=250,out=110]
     node[pos=0.9, below , left ] {
  }
      node[pos=0.5, above , left ] {\scriptsize{\emph{$\tau$}}} node[pos=0.5, below  ] {}(B)

    (B) edge     [in=70,out=290]
       node[pos=0.9, below  ] {
  }
      node[pos=0.5, above,right  ] {\scriptsize{\emph{$\tau$}}} node[pos=0.5, below  ] {}(E)
  ;

     \end{tikzpicture}

   \caption{ A  linear phase-portrait approximation  of the thermostat} \label{aut_Thermostat_PA}
\end{figure}
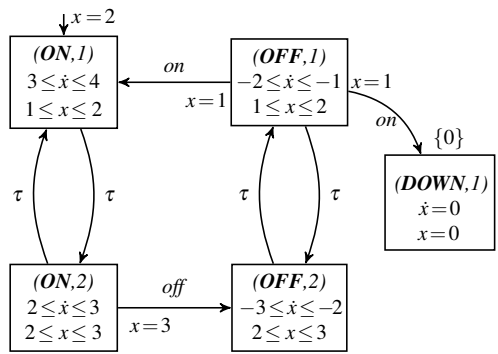

The following theorem implies that if a safety property is verified for an approximation $A$, it holds also for $H$~\cite{Sim90,Sim92}.

  \begin{theorem}\cite{Sim90}
  \label{theorem_approximation}
    If $A$ is a linear phase-portrait approximation of $H$, then $A$ simulates $H$. If it is just a split of $H$ then $A\equiv H$. In both cases, the state  $((v,i),\mathsf{a})$ of $A$ is related to  $(v,\mathsf{a})$ in~$H$.
  \end{theorem}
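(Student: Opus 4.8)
The plan is to exhibit, in both cases, the relation
$$\mathcal{R} = \{((v,\mathsf{a}),\,((v,i),\mathsf{a})) \mid (v,\mathsf{a})\in S_H,\ \mathsf{a}\in \inv^v_i\},$$
which is the graph of the projection $\eta\colon ((v,i),\mathsf{a})\mapsto(v,\mathsf{a})$, and to verify the clauses of Definition~\ref{def_relation_temps_simulation}. I would first treat the pure split $A=\theta(H)$, establishing $H\equiv\theta(H)$, and then obtain the approximation case by composing this bisimulation with the simulation $\theta(H)\preceq A$ that merely records the weakening of predicates.

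For the split case the easy clauses come first. Initial states match because $\Init_\theta((v,i))=\Init(v)\cap\inv^v_i$ while $\{\inv^v_i\}_i$ covers $\Inv(v)\supseteq\Init(v)$, so each initial $(v,\mathsf{a})$ of $H$ lies in some $\inv^v_i$, and conversely every initial state of $\theta(H)$ projects to one of $H$. Discrete transitions transfer verbatim in both directions: for $e_\theta\in\E_1$ one has $\Pre_\theta=\Pre$ and $\Reset_\theta=\Reset$, and the target valuation $\mathsf{a}'$, which lies in $\Inv(v')$, belongs to some $\inv^{v'}_j$ by the covering property; the silent $\E_2$ edges are matched on the $H$ side by a zero-duration flow (the identity).

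The substantive step is transferring a flow transition $(v,\mathsf{a})\overset{\sigma}{\rightarrow}(v,\mathsf{a}')$ of $H$ into $\theta(H)$. Its witnessing trajectory $\gamma\colon[0,\sigma]\to\mathbb{R}^n$ has compact image contained in $\Inv(v)=\bigcup_i\inv^v_i$, so a Lebesgue-number argument yields a subdivision $0=t_0<\cdots<t_k=\sigma$ and indices $i_1,\ldots,i_k$ with $\gamma([t_{l-1},t_l])\subseteq\inv^v_{i_l}$. Consecutive pieces share the point $\gamma(t_l)\in\inv^v_{i_l}\cap\inv^v_{i_{l+1}}$ --- this is exactly where the overlap of the cover is used --- so each edge $((v,i_l),\tau,(v,i_{l+1}))\in\E_2$, with precondition $\mathbb{R}^X$ and identity reset, is enabled there. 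As $\Flow_\theta(v,i_l)=\Flow(v)$, the restriction of $\gamma$ to $[t_{l-1},t_l]$ is a legal flow of duration $t_l-t_{l-1}$ in copy $i_l$; interleaving these flows with the $\tau$ edges (and a leading $\tau$ to copy $i_1$ if needed) produces a weak transition $((v,i),\mathsf{a})\overset{\sigma}{\twoheadrightarrow}((v,i_k),\mathsf{a}')$ of total duration $\sigma$, with $\mathcal{R}$-related endpoints.

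For the converse clause, a weak timed transition of $\theta(H)$ decomposes into flow pieces --- all governed by the single condition $\Flow(v)$ and joined by valuation-preserving $\E_2$ steps --- plus genuine $\E_1$ steps; the flow pieces concatenate to a curve $\gamma$ on $[0,\sigma]$ lying in $\bigcup_i\inv^v_i=\Inv(v)$, and the discrete steps map back to their $H$-edges. Here is the one delicate point, which I expect to be the main obstacle: a priori $\gamma$ is only piecewise differentiable at the junctions $t_l$. When $\Flow(v)$ is an ODE $\dot x=f^v(x)$ --- the solvable setting of this paper --- uniqueness forces the one-sided derivatives at each $\gamma(t_l)$ to agree, so $\gamma$ is a genuine solution and $(v,\mathsf{a})\overset{\sigma}{\rightarrow}(v,\mathsf{a}')$ is a single legal flow of $H$; in the rectangular case one instead notes that a curve whose derivative lies in the prescribed rectangle away from finitely many points is admissible. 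Finally, the approximation case follows by composition: $A$ and $\theta(H)$ share locations and edges, and the implications $\Inv\Rightarrow\Inv_A$, $\Flow\wedge\Inv\Rightarrow\Flow_A\wedge\Inv_A$, $\Init\Rightarrow\Init_A$, $\Pre\Rightarrow\Pre_A$, $\Reset\Rightarrow\Reset_A$ make every trajectory and discrete step of $\theta(H)$ legal in $A$, so the identity is a simulation $\theta(H)\preceq A$; composing with $H\equiv\theta(H)$ yields $H\preceq A$ with the stated pairing $((v,i),\mathsf{a})\leftrightarrow(v,\mathsf{a})$.
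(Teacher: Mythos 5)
The paper itself offers no proof of this theorem: it is imported from Henzinger et al.~\cite{Sim90} and used as a black box, so there is no in-paper argument to measure yours against. Your proof is correct, and its architecture---establish $H\equiv\theta(H)$ via the graph of the projection $((v,i),\mathsf{a})\mapsto(v,\mathsf{a})$, then obtain $\theta(H)\preceq A$ from the identity on states together with the predicate weakenings, and compose---is precisely the skeleton that the paper's proofs of the probabilistic analogues (Theorems~\ref{theorem_bissim_prob} and~\ref{theorem_simulation_appr}) presuppose: there the authors verify only the discrete clause and dispose of flow transitions by appealing to the present classical result, i.e., to exactly the Lebesgue-number subdivision and $\tau$-interleaving you carry out. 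Two caveats, neither fatal. First, your subdivision argument uses that $\theta(v)$ is an \emph{open} cover; the paper's remark after the split definition (and its thermostat example, whose cover pieces are closed and meet in a single point) indicates that the authors intend the weaker hypothesis that ``evolution is preserved'', under which the compactness argument must be replaced by an explicit choice of crossing times. Second, the difficulty you single out in the converse flow direction is genuine under this paper's literal semantics, where a flow transition is witnessed by a single differentiable $\gamma$: for an arbitrary predicate $\Flow(v)$ the concatenation of legal pieces need not be differentiable at the junctions, and if $H$ has no $\tau$-action the weak transition cannot absorb the break. Your repair for the solvable-ODE and rectangular cases is the right one, and those are the only cases the paper ever feeds into this theorem; in~\cite{Sim90} the issue does not arise because time transitions are closed under concatenation by definition.
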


  The automaton of Figure~\ref{aut_Thermostat_PA} simulates the split of the automaton (Figure~\ref{aut_Thermostat_split}), and then, by transitivity of simulation, simulates the thermostat hybrid automaton.

  The verification of initialized rectangular hybrid automata has been widely discussed, particularly in \cite{Isa06} and \cite{Kop96} and it is proved that their verification  is decidable. Hence, a  non-rectangular hybrid automata can be verified (for satisfaction of safety properties) if an initialized linear phase-portrait approximation can be defined from it.

\section{Analysis of probabilistic hybrid automata}

 In this section, we show how the two methods presented above can be used for probabilistic hybrid automata. Our definition of a probabilistic hybrid automaton (PHA)
\label{def_AHP}
 is close to but slightly more general than the one of Sproston \cite{Spr01}.  We also add the definition of finitely branching PHAs. A (discrete) probability distribution over a set  $C$ is a function $\mu: C\rightarrow{[0,1]}$ such that $\sum_{c\in C}\mu(c)\leq1$; the support of $\mu$ is defined as $\supp(\mu):=\{c \in C\mid \mu(c)>0\}$, and it is countable. 
For $U\subseteq C$, we sometimes write $\mu(U):=\sum_{c\in U}\mu(c)$.    Let $\Dist(C)$ be the set of all (discrete) distributions over $C$.



 \begin{definition}
     \label{defSHP} 
A tuple
$\,H=(\V, \,\X, \,\Init,\, \Act,\, \Inv, \Flow,\prob,\langle \pre_{v,a}\rangle_{v \in \V\!,\, a \in \Act}, {\langle \pos_{v,a}\rangle_{v \in \V\!,\, a \in \Act})}$
 is a \emph{probabilistic hybrid automaton (PHA)} if\,\,
  $\V$, $\X$  $\Init$, $\Act$, $\Inv$ and $\Flow$ are as in Def.~\ref{defHA} and
    \begin{itemize}\addtolength{\itemsep}{-4pt}
       \item $\prob: V\times \Act \rightarrow\cP_{\mathit{fin}}(\mathrm{Dist}(V\times  \cP(\mathbb{R}^X_*)))$
     encodes probabilistic transitions. 
If $\tau\in\Act$, we require that every $\mu\in \prob(v,\tau)$ is concentrated in a unique pair of the form $(v,\{\mathsf{d}\})$.
       \item $\pre_{v,a}: \prob(v,a) \rightarrow\cP(\mathbb{R}^X)$ defines  preconditions for distributions from $v \in \V$ and $a \in \Act$.
       \item $\pos_{v,a}: \prob(v,a)\times V \rightarrow\cP(\mathbb{R}^X)$ defines  postconditions for distributions associated with $v \in \V$ and $a \in \Act$.
     \end{itemize}
We say that $H$ is \emph{finitely branching} if for every $v\in V$, $a\in \Act$, $\mu\in\prob(v,a)$,
$\mu$ is \emph{finitely branching}, that is, $\supp(\mu)$ is finite   \jo{and every set $\post$ such that $(v',\post)\in \supp(\mu)$ is also finite.}
     \end{definition}
To simplify the notation, we drop the subscripts of $\pre$ and $\pos$ when there is no ambiguity.

The semantics of PHAs, is given by probabilistic transition systems.  States are defined in the same way.
  As for non-probabilistic hybrid automata, we distinguish two kinds of transitions in PHAs.  Flow transitions are  the same, but discrete transitions are now probabilistic and hence defined from a state $(v,\mathsf{a})$ to a distribution.  To define transitions, we need some notations on valuations.
  For  $\mathsf{d}\in \mathbb{R}_*^X$,  $\mathsf{a}\in \mathbb{R}^X$,   $\mathsf{post}\subseteq \mathbb{R}_*^X$,
$ \mathsf{A}\subseteq \mathbb{R}^X $ and $x\in X$, let
$$\mathsf{d}[\mathsf{a}](x)\!\!:=\!\!\begin{cases}\!\mathsf{d}(x)  &\!\!\!\!\! \text{if}\ \mathsf{d}(x)\neq*\\ \!\mathsf{a}(x)\ & \!\!\!\!\! \text{if}\ \mathsf{d}(x)=*,\end{cases}
 \quad \mbox{and}\quad
\begin{array}{ll}
\mathsf{post}[\mathsf{A}]\!:=\!\!\{\mathsf{d}[\mathsf{a}]\mid \mathsf{d}\in  \mathsf{post}\!, \mathsf{a}\in \mathsf{A}\}
\\\mathsf{post}(x)\!\!:=\!\!\{\mathsf{d}(x)\mid\mathsf{d}\in \mathsf{post}\}.
\end{array}$$
Transitions of action $a$ from a state $(v,\mathsf{a})$ in the underlying PTS are as follows. Let $\mu \in \prob(v,a)$, $\mathsf{a} \in \pre_{v,a}(\mu)$,  $\supp(\mu)\!=\!\!\{(v_i,\post_i)\}_{i=1}^m $. Each combination of $\d_i\in\post_i$, $i\!=\!1,\ldots,m$, such that $\d_i[\mathsf{a}]\in\pos(\mu,v_i)$, defines a transition
 $$(v,\mathsf{a})\overset{a}\rightarrow \mu_\mathsf{a}^{\langle \d_i\rangle},$$
where $\mu_\mathsf{a}^{\langle \d_i\rangle}$ is positive  on (arrival) states $(v_i,\d_i[\mathsf{a}])$;  the probability that the automaton transits to a state $(v',\mathsf{a}')$ is
\begin{align*}
\mu_\mathsf{a}&^{\langle \d_i\rangle}(v',\mathsf{a}'):=
\begin{cases}
\displaystyle\sum_{{i=1}}^m\,\{\mu(v_i,\post_i)\!\mid\! v_i\!=\!v'\!,\ \! \d_i[\mathsf{a}]\!=\!\mathsf{a}'\}& \mbox{ if }\ \mathsf{a}'\! \in \! \pos(\mu,v')\\ \qquad0 &\mbox{~~otherwise.}\end{cases}
\end{align*}

  \begin{example}
A probabilistic version of the thermostat is shown in Figure~\ref{aut_Thermostat_Prob1}.
     \begin{figure}

  \begin{center}
  \begin{tikzpicture}[->,>=stealth',shorten >=0.3pt,auto,node distance=3cm,
                    semithick]
  \tikzstyle{every state}=[rectangle,fill=none,draw=black,text=black, minimum width=0em]
  \node[state, draw=none] (R) { };
  \node[state,  node distance=1cm] (A)  [right of=R]                   {\begin{minipage}{1.09cm}  \begin{center}  \scriptsize{\emph{\textbf{ON}}\\  ${\!\dot{x}=\!-x\!+\!5\!}$\\ ${1 \!\leq x\! \leq\!3}$} \end{center}
\end{minipage}};

  \node[state]         (B) [right  of=A]      {\begin{minipage}{0.94cm}\begin{center}  \scriptsize{\emph{\textbf{OFF}}\\${\dot{x}\!=\!-x}$\\ ${\!1 \!\leq x\! \leq\!  3}$}
\end{center}\end{minipage}};

  \node[state]         (C) [right  of=B]      {\begin{minipage}{0.9cm}\begin{center}  \scriptsize{\emph{\textbf{DOWN}}\\$\dot{x}=0$\\ $x=0$}
\end{center}\end{minipage}};

  \path   (R) edge  node[pos=-0.7,above] {\scriptsize{$\!x\!=\!2$}}(A)
  (A) edge     [bend right]     node[pos=0.17, below ] {\scriptsize{$\!x\!=\!3$}}  node[pos=0.9, below , sloped ] {
  }    node[pos=0.5, above , sloped ] {\scriptsize{\emph{off}}} node[pos=0.5, below , sloped ] {}
  node[pos=0.6, below , sloped ] {\tiny{$1$}}
  (B)
  (B) edge     [bend right]     node[pos=0.22, below  ] {\scriptsize{$\!x\!=\!1\!$}}  node[pos=0.9, below , sloped ] {
  }
      node[pos=0.5, above , sloped ] {\scriptsize{\emph{on}}} node[pos=0.5, below , sloped ] {}
      node[pos=0.6, below , sloped ] {\tiny{$0.9$}}
      (A)
  (B)  edge[in=160,out=0, bend left]     node[pos=0.17, above  ] {\scriptsize{$x=1$}}         node[pos=0.5, below , sloped ] {} node[pos=0.95, below , sloped ] {\scriptsize{$\{0\}$}} node[pos=0.5, above , sloped ]{\scriptsize{\emph{on}}}
   node[pos=0.6, below , sloped ] {\tiny{$0.1$}}
   (C);
     \end{tikzpicture}
\end{center}
\caption{A probabilistic version of the thermostat}\label{aut_Thermostat_Prob1} 
\end{figure}
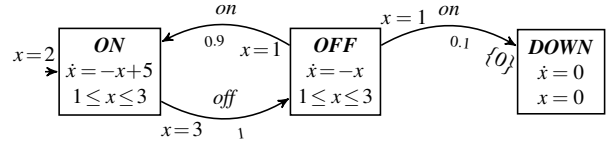
Each discrete transition is labeled by a probability value and an action.
  Since there is only one variable, a valuation can be represented as a real number.     For mode $\OFF$, we have $\prob(\mathit{OFF}, on)=\{\mu\}$ where $\pre(\mu)=\{1\}$,  such that $\mu(\mathit{ON}, \{*\})=0.9$, that is,  the temperature is unchanged,
  and $\mu(\mathit{DOWN},\{0\})=0.1$. 
   Here, defining $\pos(\mu,-)$ is not necessary since it is encoded in $\mu$. Suppose that in another example  one would set $\mu(\mathit{ON}, [1;3])=0.9$. This would mean that the temperature would end up in the interval $[1;3]$ and that the exact value would happen non deterministically.  This example would not be finitely branching.

    \end{example}

 \jo{We now discuss how Definition~\ref{defSHP} of PHA slightly differs from previous ones~\cite{{Spr01}}.}
First note that non deterministic transitions in PHA arise in two ways: when the image of $\prob$ is not a singleton, and from the possible combinations $\d_i\in\post_i$, $i\!=\!1,\ldots,m$, that we can obtain. Hence, the expression \emph{finitely branching} is well chosen because  every set $\post$, such that $(v,\post)\in \mu$, being finite in the underlying probabilistic
transition system, every state $(v,\mathsf{a})$ will have finitely many distributions $\mu_{\mathsf{a}}^{\langle \d_i\rangle}$ associated to any action.

 \jo{The star notation is more general than the reset set of Sproston~\cite{Spr01}  if there is more than one variable. In the latter,  the target of a transition is a distribution over $V\times \cP(\mathbb{R}^n)\times \cP(X)$, where the third component of a tuple $(v,\post,X')$, called a \emph{reset set}, represents the set of variables that can change value in the transition, with respect to valuations of $\post$.  The star notation allows to state, for example, that valuation $(x,y)$ will be modified to $(x,0)$, $(x,2)$, $(0,y)$, $(1,1)$ with probability 1 non deterministically.  The corresponding set would  be $\post_0:=\{(*,0),(*,2),(0,*),(1,1)\}$. This is an important feature to describe the transitions of the clock-translation (see  Section~\ref{sec_prob_clock}) and is not possible with the reset set because there is no uniform reset of any variable in $\post_0$.  Note in passing that the star notation avoids a third component in the notation and will allow to state very simply the notion of initialised PHA. }
The  use of a postcondition function together with the star notation allows to define distributions on complex sets, such as:  $\post_0\cap ([0;3]\times[1;4])$, the distribution being defined on $\post_0$ and the rectangle being the postcondition. We could not transfer the latter into the distributions by defining $\mu$, for example, as
 $\mu(v',\post_0\cap([0;3]\times[1;4]))$ since $\post_0$ may contain valuations that assign $*$ to a variable which implies that its value depends on the actual state, or valuation for which the distribution $\mu$ will be used.  Postconditions are  necessary for the splitting of PHAs, if one wants to avoid  the even more general model that  consists in defining probabilistic transitions  $\prob$ from  $S_H\times A$ to $\cP_{\mathit{fin}}(\mathrm{Dist}(V\times  \cP({\mathbb{R}^X})))$.
This  is too general for practical purposes: indeed, any description of a system must be finite and hence states must be  described in a parametric (or generic) way.  

  The condition on $\tau$ transitions is to simplify the definition of weak transitions, since it permits to write $s\overset{\tau}\rightarrow{s'}$. There may be more than one $\tau$ transition from a mode $v$, but for each $\mu\in \prob(v,\tau)$, there is a unique pair $(v',\{\d\})$ such that $\mu(v',\{\d\})=1$.  Then, following definition of $\mu_\mathsf{a}^{\langle \d_i\rangle}$ above, there is a $\tau$-transition to state $(v',\d[\mathsf{a}])$ if and only if  $\d[\mathsf{a}]\in\pos(\mu,v')$ (that is, $\mu_\mathsf{a}(v',\d[\mathsf{a}])=1$).
   Weak flow transitions are then defined between states as for hybrid automata, and we write $s \stackrel{a}\twoheadrightarrow{}\mu$ if there exists a finite sequence of transitions $s\stackrel{\tau}\rightarrow{s_1}\overset{\tau}\rightarrow{s_2}\ldots\overset{\tau}\rightarrow{s_k}\overset{a}\rightarrow{\mu}$.


We now define a notion of weak simulation between PHAs.
Let ${\preceq\subseteq S\times T}$ be a relation between two sets $S$ and $T$.  For $X\subseteq S$, we use the notation $\preceq(X):=\{t\in  T\mid \exists s\in X. s\preceq t\}$.

\begin{definition}
\label{rel_sim_bis}
Let $H_1$, $H_2$ be two probabilistic hybrid automata.  A relation $\preceq\subseteq S_{H_1}\times S_{H_2}$  is a \emph{simulation} if any initial state of $H_1$ is related to an initial state of $H_2$ and whenever $s_1\preceq s_2$, we have:
\begin{itemize}
\item if $s_1\stackrel{a} {\twoheadrightarrow}\mu_1$, for $a\in\Sigma
\setminus \{\tau\}$ then $s_2\stackrel{a} {\twoheadrightarrow} \mu_2$
and $\mu_1(X) \leq \mu_2(\preceq(X))$ for every $X$;
\item if $s_1\stackrel{\sigma}{\twoheadrightarrow} s'_1$, for $\sigma \in \mathbb{R}_{\geq0}$,
$s_2\stackrel{\sigma }{\twoheadrightarrow} s'_2$ and $s'_1\preceq s'_2$.
\end{itemize}
Then we say that $H_1$ is simulated by $H_2$, written ${H_1\preceq H_2}$.
 If $\preceq^{-1}$ is also a simulation, it is a bisimulation.  Equivalently,
an equivalence relation is a \emph{bisimulation} if in the condition above we have $\mu_1(X) =\mu_2(X)$ for each equivalence class $X$.
  \end{definition}
This definition is known to be equivalent to the one using weight functions: see Desharnais et al.~\cite{DesLavTra08} for a proof that the inequality between $\mu_1$ and $\mu_2$ above  is equivalent to the existence of a network flow between them; it is well-known~\cite{Baier96b}, in turn, that the flow condition is equivalent to the existence of a  weight function between $\mu_1$ and $\mu_2$.



  \bigskip



\subsection{Probabilistic clock-translation}\label{sec_prob_clock}         
 \jo{   In this section, we prove that clock-translation~\cite{Sim90}, can be applied to a PHA and that it results in a  bisimilar PHA, as expected.}
The general method is to first compute a non probabilistic hybrid automaton from a PHA.  Then we apply clock-translation and finally add probabilities in order to obtain a probabilistic clock-translation. There is no condition for computing the underlying non probabilistic HA but we need a notion of solvability so that we will be able to use the clock-translation method thereafter.  A variable  $x$ of a PHA $H$ is \emph{solvable} if \begin{itemize}
\item the two first conditions of solvability for non probabilistic systems are satisfied
\item 
 for every state $s\in S_H$, if $s\overset{a}\to\mu$, and  $\mu(v',\post)>0$, then $\post(x)=\{r\}$ for some $r\in \mathbb{R}_*$.  Moreover,  if $\post(x)=\{*\}$, then we must have $f^v_x=f^{v'}_x$.\smallskip
\end{itemize}
Let $H_p=(\V, \X, \Init, \Act, \Inv, \Flow, \prob,\langle \pre\rangle$, $\langle \pos\rangle)$ be a PHA.
 The algorithm has three steps:   
\\[10pt]
\emph{Step 1}:  Define $H\!\!:=\!\!(\V, \X, \Init, \A, \Inv, \Flow, \E, \Pre,\Reset)$ the \emph{underlying non probabilistic} HA of~$H_p$ as:
       \begin{itemize}
       \item  $\begin{array}[t]{ll}\A:=\{\!\!&a_{\mu}^\post \mid \exists v,v'\ \in \V \mbox{ such that } \mu \in \prob(v,a) \mbox{ and } \mu(v',\post)>0\}\end{array}$.
       \item   $\E\!:=\!\{\!(v,\! a_{\mu}^\post,\!v')\mid\! \mu \! \in\! \prob(v,a)\mbox{ and } \mu(v', \post)>0\}$. Finally, for $e:=(v,\! a_{\mu}^\post,\!v')\in E$, we set \\
       $\Pre(e):= \pre_{v,a}(\mu)$ and  $\Reset(e,\mathsf{a}):=\post[\mathsf{a}]$.
       \end{itemize}
    Note that solvability of $H_p$ implies that $\Reset^x (e)=\{r\}\subseteq\mathbb{R}_*$ and hence $H$ is also solvable.\\[10pt]
 \emph{   Step 2}: {Since $H$ is solvable, let $T=(\V_T, \X_T, \Init_T, \A, \Inv_T, \Flow_T, \E_T, \Pre_T, \Reset_T)$ be the clock-translation of $H$ w.r.t.~$x$. Hence, each transition $e=(v,a_{\mu}^\post,v')$ of $H$ becomes a transition $e_T=((v,c), a_{\mu}^\post, (v',r))$ in $T$, where $r=c$ if $\post(x)=\{*\}$, otherwise  $\post(x)=\{r\}$. }
 \\[10pt]
  \emph{  Step 3}:
  Finally, we build ${T_p=(\V_T, \X_T, \Init_T, \Act, \Inv_T, \Flow_T, \Prob,\langle \Pre \rangle,\langle \Pos \rangle)},$ the probabilistic clock-translation  of $H_p$ from $T$ as follows.
   Let $(v,c)$ be in $\V_T$, and $a$ in $\Act$. $\Prob((v,c),a)$ contains all distributions $\nu_\mu$,  defined from some $\mu \in \prob(v,a)$ as follows:
\begin{itemize}
\item for each edge $e_T=((v,c), a_{\mu}^\post, (v',r))$ such that $\Reset^{t_x}_T(e_T)=\{0\}$ (i.e., $\Reset^x(e)=\{r\}$), let
  \vspace{-1mm}\begin{align*}
\nu_\mu((v',r),\post[t_x\mapsto 0]):=&\mu(v',\post);
\end{align*}  \vspace{-5mm}
\item for each transition $e_T=((v,c), a_{\mu}^\post, (v',c))$ such that $\Reset^{t_x}_T(e_T)=\Reset^x(e)=\{*\} $, let  \vspace{-1mm}
$$\nu_\mu((v',c),\post[t_x\mapsto *]):=\mu(v',\post).
$$  
\end{itemize}
 \jo{For both cases,  $\Pre(\nu_\mu):={{\Pre_{T}}(e_T)}$, and $\Pos(\nu_\mu,(v',r)):=\{\mathsf{a}\in \mathbb{R}_*^{X\cup\{t\}}\mid  \mathsf{a}|_X \in  \pos(\mu,v')\}$.
}

   \begin{example}
    \label{exProbCT}
 \joo{ The clock-translation of the probabilistic thermostat automaton is a slight modification of the HA of Figure~\ref{aut_Thermostat_CT};  all transitions get probability $1$ except for $\mathit{on}$-transitions from $(\mathit{OFF},3)$ which have probability $0.9$ to $(\mathit{ON},1)$ and $0.1$ to $(\mathit{DOWN},1)$. }

    \end{example}

 We should now prove that the construction yields a valid PHA: this will be a consequence of the following theorem.

%


       \begin{theorem}
       \label{theorem_bissim_prob}If $T_p$ is the clock-translation of $H_p$, then $H_p$ and $T_p$ are bisimilar.
       \end{theorem}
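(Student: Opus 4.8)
The plan is to establish bisimilarity between $H_p$ and $T_p$ by lifting the nonprobabilistic bisimulation of Theorem~\ref{theorem_bis_trans} to the probabilistic setting. Recall that $T_p$ is built on top of the clock-translation $T$ of the underlying HA $H$, and that Theorem~\ref{theorem_bis_trans} already gives a bisimulation between $H$ and $T$ via the projection $\eta:S_T\to S_H$ with $\eta((v,c),\mathsf{a})=(v,\mathsf{a}')$ where $\mathsf{a}'|_X=\mathsf{a}|_X$ and $\mathsf{a}'(x)=g_c(\mathsf{a}(t_x))$. The natural candidate relation is the graph of $\eta$ restricted to the probabilistic state spaces, i.e.\ $\preceq\,:=\{(((v,c),\mathsf{a}),\,\eta((v,c),\mathsf{a}))\}\subseteq S_{T_p}\times S_{H_p}$, and I will show it is a bisimulation in the sense of Definition~\ref{rel_sim_bis}. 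Since $\eta$ is a bijection onto the reachable states (the clock value $t_x$ together with the starting value $c$ determines $x$, and conversely the monotone solvability of $x$ determines a unique $(c,t_x)$), the relation is essentially functional, so the distribution-lifting condition $\mu_1(X)=\mu_2(X)$ for equivalence classes collapses to checking that corresponding distributions assign equal probabilities to $\eta$-matched states.

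First I would dispatch the flow transitions and the initial-state condition: these involve no probability and are inherited verbatim from Theorem~\ref{theorem_bis_trans}, since $\Flow_T$, $\Inv_T$ and $\Init_T$ were defined precisely so that $\eta$ preserves and reflects timed moves and initial states. The substance is the discrete (probabilistic) transitions. Here I would fix a state $((v,c),\mathsf{a})$ of $T_p$, an action $a$, and a distribution $\nu_\mu\in\Prob((v,c),a)$ enabled at this state (so $\mathsf{a}\in\Pre(\nu_\mu)=\Pre_T(e_T)$), and trace it through Step~3 of the construction back to the originating $\mu\in\prob(v,a)$ in $H_p$. The key computation is to verify that for every arrival pair, the probability $\nu_\mu$ assigns to $(v',r)$ with the reset-and-clock-update valuation equals the probability $\mu$ assigns to $(v',\post)$ in $H_p$, and that the corresponding arrival states are $\eta$-related. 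This amounts to checking two cases mirroring the bullet points of Step~3: when $\post(x)=\{r\}\subseteq\mathbb{R}$ the clock is reset to $0$ and $\eta$ sends the new clock value back to $g_r(0)=r$; when $\post(x)=\{*\}$ the clock is preserved and, because solvability forces $f^v_x=f^{v'}_x$, the same solution $g_c$ continues to govern $x$, so $\eta$ is consistent across the transition. The postcondition bookkeeping $\Pos(\nu_\mu,(v',r))=\{\mathsf{a}\mid \mathsf{a}|_X\in\pos(\mu,v')\}$ must be matched up so that the enabled combinations $\langle\d_i\rangle$ on the $T_p$ side correspond bijectively to those on the $H_p$ side, guaranteeing the same family of resolved distributions $\nu_{\mu,\mathsf{a}}^{\langle\d_i\rangle}$ and $\mu_{\mathsf{a}'}^{\langle\d_i\rangle}$.

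The main obstacle I expect is the interaction between the $*$-notation (stability) and the clock variable $t_x$ during resets. One must verify carefully that the update $\post[t_x\mapsto *]$ in the stable case, combined with the evaluation operator $\d[\mathsf{a}]$, yields exactly the clock value that $\eta$ needs in order to recover the correct $x$-value via $g_c$; this is where the solvability hypothesis $f^v_x=f^{v'}_x$ for $*$-resets is indispensable, and it is easy to get the direction of the correspondence wrong. A secondary technical point is confirming that the construction genuinely produces a \emph{valid} PHA --- that each $\nu_\mu$ is a well-defined subdistribution over $\V_T\times\cP(\mathbb{R}^{X_T}_*)$ and that the $\tau$-concentration requirement is preserved --- which, as the excerpt notes, is meant to follow from the bisimulation itself; I would fold this into the argument by observing that the probabilities are copied unchanged from $\mu$ and the support is relabelled bijectively, so no mass is created or lost. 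Once the matching of distributions is established in both directions, symmetry of $\eta$ (it is a bijection on reachable states) gives that $\preceq^{-1}$ is also a simulation, yielding bisimilarity and simultaneously the well-formedness of $T_p$.
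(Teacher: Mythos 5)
Your proposal follows essentially the same route as the paper: both take the graph of the projection $\eta$ from Theorem~\ref{theorem_bis_trans}, dispatch flow transitions and initial states via the non-probabilistic bisimulation, and reduce the discrete case to verifying that $\nu_{\mu,\mathsf{a}}^{\langle \d_i\rangle}$ and $\mu_{\mathsf{a}|_X}^{\langle \d_i|_X\rangle}$ agree on $\eta$-fibres, split into the reset case $\post(x)=\{r\}$ and the stability case $\post(x)=\{*\}$. The one slight overstatement is your claim that $\eta$ is a bijection on reachable states; the paper only needs (and only claims) that each set $\{(v',\mathsf{a}')\}\cup\eta^{-1}((v',\mathsf{a}'))$ is an equivalence class, which suffices to check the distribution condition on singletons, but this does not affect the correctness of your plan.
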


       \begin{proof}{}
       \label{preuve_theorem_bisim_prob}
Let $H$ be the underlying non probabilistic automaton of $H_p$ and $T$ its clock-translation.  By Theorem \ref{theorem_bis_trans}, $H$ and $T$ are bisimilar through $\eta:S_T\to S_H$ which, being a function, returns a unique state for any state of $T$.
%
As $H_p$ and $H$ have the same state space, similarly for $T$ and $T_p$, $\eta$ can be seen as a function between states of $T_p$
and $H_p$.   We prove that $\eta$ is a  bisimulation between  $T_p$ and $H_p$.

Let $s=((v,c),\mathsf{a})$ be a state of $T_p$.  There are two kinds of transitions to check in the definition of simulation.
For flow transitions, let $s\overset{\sigma}{\twoheadrightarrow}{s'}$, where $\sigma\in \mathbb{R}_{>0}$. Then  since $\eta$ is a bisimulation between $H$ and $T$, we obtain $\eta(s)\overset{\sigma}{\twoheadrightarrow}\eta(s')$, as wanted.
 \jo{For discrete transitions, we have to prove that for all $\nu_\mu\in \Prob((v,c),a)$ defined from $\mu\in \prob(v,a)$, 
for $\mathsf{a}\in \Pre(\nu_\mu)$ and any combination ${\langle \d_i\rangle}$ from the support of $\nu_\mu$,  we have ${\nu_{\mu,\mathsf{a}}^{\langle \d_i\rangle}(\eta^{-1}(U))=\mu_{\mathsf{a}|_X}^{\langle \d_i|_X\rangle}(U)}$ for all  $U \subseteq S_{H_p}$.} In fact, we need only to prove it for $U$ equals to some state $(v',\mathsf{a'})$ since $\{(v',\mathsf{a'})\}\cup \eta^{-1}((v',\mathsf{a'}))$ is an equivalence class.
\begin{align*}
\nu_{\mu,\mathsf{a}}^{\langle \d_i\rangle}( \eta^{-1}((v',\mathsf{a'})))
&=  \sum_{\mathsf{b},r}\,\{\, \nu_{\mu,\mathsf{a}}^{\langle \d_i\rangle} ((v',r),\mathsf{b})\mid\underbrace{\mathsf{b}|_X=\mathsf{a}', \ g_r(\mathsf{b}(t_x))=\mathsf{a}'(x)}_{P(\mathsf{b},u)}\}\\
&= \!\sum_{\mathsf{b},r}\, \sum_{{  i=1}}^m \,
     \{\,\nu_\mu((v_i,r_i),\post_i)\mid \begin{array}{ll}v_i=v', r_i=r,\ \d_i[\mathsf{a}]=\mathsf{b}, \\  \post_i(x)=\{r_i\}, P(\mathsf{b},r_i)\end{array}\}\\
&\qquad  \cup \{\,\nu_\mu((v_i,r_i),\post_i)\mid \begin{array}{ll}v_i=v', r_i\!=\!r\!=\!c,\ \d_i[\mathsf{a}]=\mathsf{b}, \\  \post_i(x)=\{*\}, P(\mathsf{b},c)\end{array}\}\\
&=\! \sum_{{  i=1}}^m\,\{\,\nu_\mu((v_i,r_i),\post_i)\mid \begin{array}{ll}v_i=v',\ \d_i[\mathsf{a}]|_X=\mathsf{a}',\\ \post_i(x)=\{r_i\}, \  \mathsf{b}(t_x)=0\end{array}\}\\
&\qquad  \cup \{\,\nu_\mu((v_i,c),\post_i)\mid \begin{array}{ll}v_i=v',\ \d_i[\mathsf{a}]|_X=\mathsf{a}', \\ \post_i(x)=\{*\}, \  \mathsf{b}(t_x)=\mathsf{a}(t_x)\end{array}\}\\
&=\!\sum_{{  i=1}}^m\,\{\,\mu(v_i,\post_i)\mid \begin{array}{ll}v_i=v',\ \d_i[\mathsf{a}]|_X=\mathsf{a}', \\\post_i(x)=\{r_i\}\mbox{ or }\{*\}\end{array}\}
\\&=\;\mu_{\mathsf{a}|_X}^{\langle \d_i|_X\rangle}(v',\mathsf{a'}).
\end{align*}
In the third equality, the double sum is reduced to a single one because $i$ determines $\mathsf{b}$ and $r$.    
       \end{proof}

   \begin{corollary}
      The clock-translation of  a solvable PHA is a PHA.
   \end{corollary}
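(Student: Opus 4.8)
The plan is to verify that the tuple $T_p=(\V_T,\X_T,\Init_T,\Act,\Inv_T,\Flow_T,\Prob,\langle\Pre\rangle,\langle\Pos\rangle)$ produced by the three-step construction satisfies every clause of Definition~\ref{defSHP}; most of this is bookkeeping that becomes available once Theorem~\ref{theorem_bissim_prob} is in hand. First I would observe that the purely continuous data $(\V_T,\X_T,\Init_T,\Inv_T,\Flow_T)$ of $T_p$ is, by construction, exactly the continuous data of the non-probabilistic clock-translation $T$ of the underlying automaton $H$. Since $T$ is already a genuine HA (it is the object of Theorem~\ref{theorem_bis_trans}), these components inherit the requirements of Definition~\ref{defHA}, including $\Init_T(v,c)\subseteq\Inv_T(v,c)$ and the finiteness of $\V_T$ (each $D_v(x)$ being finite). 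The action set is the original finite $\Act$, so nothing changes there.

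The substantive point is to check that $\Prob$ has the prescribed codomain $\cP_{\mathit{fin}}(\Dist(\V_T\times\cP(\mathbb{R}^{\X_T}_*)))$. Here I would argue that each $\nu_\mu$ is a legitimate sub-distribution: the assignment sending $(v',\post)\in\supp(\mu)$ to its target, either $((v',r),\post[t_x\mapsto 0])$ when $\post(x)=\{r\}\subseteq\mathbb{R}$, or $((v',c),\post[t_x\mapsto *])$ when $\post(x)=\{*\}$, is injective (distinct pairs keep either distinct locations or distinct postcondition sets, and the two cases are separated by the value $0$ versus $*$ of the new coordinate $t_x$). Injectivity means the mass is transported without merging, so $\sum\nu_\mu=\sum\mu\le 1$ and $\supp(\nu_\mu)$ is finite whenever $\supp(\mu)$ is; finiteness of $\Prob((v,c),a)=\{\nu_\mu\mid\mu\in\prob(v,a)\}$ then follows from finiteness of $\prob(v,a)$. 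This is precisely the well-formedness implicitly confirmed by the mass computation in the proof of Theorem~\ref{theorem_bissim_prob}. I would then note that $\Pre$ and $\Pos$ have the required types, and that $\Pre(\nu_\mu):=\Pre_T(e_T)$ is unambiguous because every edge $e_T$ arising from a fixed $\mu$ carries the same precondition $\pre_{v,a}(\mu)$.

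The step requiring the most care is the silent-action clause: if $\tau\in\Act$ and $\mu\in\prob(v,\tau)$, then by definition $\mu$ is concentrated in a single pair $(v,\{\d\})$, and I must check that $\nu_\mu$ is likewise concentrated in a single pair of the form $((v,c),\{\d''\})$ whose location is again the source split-mode $(v,c)$. Concentration in one pair is immediate from the injectivity above, but matching the source mode forces $\d(x)=*$: if a silent move reset $x$ to a concrete $r$, the construction would send $(v,c)$ to the distinct copy $(v,r)$. I therefore expect the main obstacle to be showing that solvability, together with the role of $\tau$ as a genuinely silent stuttering move, guarantees $\post(x)=\{*\}$ on $\tau$-transitions, so that the $*$-branch of Step~3 applies and the target split-mode coincides with $(v,c)$. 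Once this is secured, the $*$-case of solvability ($f^v_x=f^v_x$) is trivially met and the $\tau$-clause holds, completing the verification that $T_p$ is a PHA.
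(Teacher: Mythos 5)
Your central argument is the same as the paper's: the paper's proof reduces the corollary to showing that each $\nu_\mu$ has total mass $1$, and establishes this by exhibiting a bijection between $\supp(\mu)$ and $\supp(\nu_\mu)$ (via the observation that $\d(t_x)=0$ iff $\d(x)=r$ and $\d(t_x)=*$ iff $\d(x)=*$, so restriction to $X$ is injective on the relevant valuations), exactly your ``injectivity means the mass is transported without merging'' step; the paper also notes, as you do, that the same fact falls out of the computation in Theorem~\ref{theorem_bissim_prob} by taking $U=S_{T_p}$. The surrounding bookkeeping (continuous components inherited from $T$, finiteness of $\Prob((v,c),a)$, well-typedness of $\Pre$ and $\Pos$) is fine and is simply left implicit in the paper.

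The one place you go astray is the $\tau$-clause, which you yourself single out as ``the step requiring the most care'' and then leave unresolved. You read Definition~\ref{defSHP} as demanding that the target location of a silent distribution coincide with the source mode, and conclude you would need to show that solvability forces $\post(x)=\{*\}$ on $\tau$-transitions. Neither half of this survives scrutiny. First, solvability does \emph{not} give you $\post(x)=\{*\}$ on $\tau$-transitions: it only guarantees $\post(x)=\{r\}$ for some $r\in\mathbb{R}_*$, and a concrete reset $r\in\mathbb{R}$ on a silent edge is perfectly admissible, so the lemma you propose to prove is false and cannot be the missing piece. Second, it is not needed: the intended reading of the clause (made explicit in the discussion following the semantics, where the concentration pair is written $(v',\{\d\})$, and forced by the splitting construction, whose $\tau$-edges deliberately go between \emph{distinct} copies $(v,i)$ and $(v,j)$) is only that each $\mu\in\prob(v,\tau)$ is a point mass on a single pair with a singleton valuation set. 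That property is preserved by your own injectivity argument, whether the clock-translation sends $(v,c)$ to $(v',r)$ or to $(v',c)$. So the obstacle you flag is illusory, and once it is removed your proof closes and coincides with the paper's.
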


    \begin{proof}{}
   \label{preuv_valid_p}
We only need to prove that every defined $\nu_\mu$ is a distribution, that is,  $\nu_\mu(S_{T_p})$ is 1.  We do so by showing that elements of $\supp(\nu_\mu)$ are in bijection with elements of $\supp(\mu)$.   By construction, for every $\d\in\post$ such that $\nu_\mu(s,\post)>0$,  we have $\d(t_x)=\{0\}$ if and only if  $\d(x)=r$ and $\d(t_x)=*$ if and only if  $\d(x)=*$.  This implies that $\d|_X=\d'|_X$ if and only if $\d=\d'$, as wanted.  Another proof is obtained by taking $U=S_{T_p}$ in the proof of Theorem~\ref{theorem_bissim_prob}.
\mbox{}
\end{proof}

    If all variables of $H_p$ are solvable, then its clock-translation with respect to all its variables will yield a probabilistic timed automaton.  Knowing that the model-checking of
       probabilistic timed automata is decidable~\cite{Spr02}, it implies that  the model-checking of  solvable PHAs is decidable.



\subsection{Probabilistic linear phase-portrait approximation}         

   In this section, we show how to apply  the linear phase-approxima\-tion method to a probabilistic hybrid automaton, and that it results in a rectangular hybrid automaton which simulates it.

Let $H_p\!=\!(\V, \X, \Init, \Act, \Inv, \Flow, \prob,\langle \pre\rangle, \langle \pos \rangle)$ be a fi\-ni\-tely branching PHA. The method of approximation in a probabilistic context follows the same kind of steps as the clock-transla\-tion, by going through an underlying non probabilistic hybrid automaton (this approach is also the one of Zhang et al.~\cite{cav2010}).  However, this translation is simpler, as well as smaller, here as no condition has to be satisfied by the non probabilistic automaton.
\\[5pt]
\emph{Step 1}:  We define, $H=(\V, \X, \Init, \A, \Inv, \Flow, \E, $ $\Pre, \Reset)$, the \emph{underlying non probabilistic} hybrid
automaton of $H_p$ as follows:

       \begin{itemize}%
       \item  $\A:=\{a_{\mu}\mid \exists v\in \V$ such that $\mu \in \prob(v,a)\}$.
       \item For each $\mu \in \prob(v,a)$:
           $\E$ will contain all ${e=(v, a_{\mu}, v')}$ such that there is some $(v', \post)\in \supp(\mu)$.
$\Pre(e)= \pre(\mu)$.
$\Reset(e,\mathsf{a})=\displaystyle\bigcup_ {(v', \post)\in \supp(\mu)}\!\!
\post[\mathsf{a}]\cap \pos(\mu,v').$
       \end{itemize}
 \emph{   Step 2}: {We build the linear phase-portrait approximation of $H$:
  $$T=(\V_\theta, \X_\theta, \Init_\theta, \A, \Inv_\theta, \Flow_\theta, \E_\theta, \Pre_\theta, \Reset_\theta).$$
  \emph{Step 3}: Finally, we build $$T':=(\V_\theta, \X_\theta, \Init_\theta, \Act, \Inv_\theta, \Flow_\theta,\Prob, \langle \Pre \rangle, \langle \Pos\rangle)$$ the probabilistic linear phase-portrait approximation of $H_p$ from $T$ as follows.
\\[5pt]
 Let $(v,i)$ be in $\V_\theta$, and $a$ in $\Act$. $\Prob((v,i),a)$ contains two kinds of distributions:
 \begin{itemize}
\item $\mu_\theta$  defined as follows, for each $\mu \in \prob(v,a)$.  For each $e_\theta=((v,i), a_{\mu}, (v',j))$ $\in E_\theta$ 
and  $\post$ such that $(v',\post)\in \supp(\mu)$,  
we define
$$\mu_\theta((v',j),\post):=\mu(v',\post),$$
and $\mu_\theta$ is zero elsewhere.  Preconditions and postconditions are
\begin{itemize}
\item ${\Pre(\mu_\theta):=\Pre_\theta(e_\theta)} \cap \inv_i^{v}$ and
\item $\Pos(\mu_\theta,(v',j)):=\overline{\inv}_j^{v'}\cap\Pos(\mu,v')$,
\end{itemize}
where  $\overline{\inv}^{v'}_j:=\inv^{v'}_j\setminus(\cup_{k<j} {\inv}^{v'}_k)$ defines a partition of $\Inv(v')$.
\item if $a=\tau$,  all $\mu_\tau^{j}$ defined as
$$\mu_\tau^{j}((v,j), \{*\}^X):=1,$$
that is, the valuation is unchanged during the silent transition from a copy of $v$ to another. These transitions correspond to the edges $((v,i), \tau, (v,j))\!\in \!E_\theta$.  There is no special precondition or postcondition, and hence we set $\Pre(\mu_\tau^{j}):=\inv_i^v$ and $\Pos(\mu_\tau^{j},(v',j)):=\inv_j^{v'}$.

\end{itemize}

\begin{remark}
 \joo{The use of postconditions, in presence of the star notation, is crucial in the apparently simple definition of $\mu_\theta$ above, both to make it correct, and to indeed permit a simple and clean formulation.
This is on one hand because of the reasons mentioned after Definition~\ref{defSHP}.  On another hand, if we used the less general syntax involving reset sets instead of the star notation in distributions, the preconditions of $\mu_\theta$ would have to take into account the invariant of the arrival state: if a probability is assigned to a pair that ends up not being valid because of the splitting of transitions, the probability ends up missing, i.e., $\mu_\theta$ would not sum up to 1: the machinery to overcome this loss of probability would complicate a lot the notation.}  The use of $\overline{\inv}_j^{v'}$ is also crucial in the definition:  it makes sure that we do not use, in $\mu_\theta$, a probability value from $\mu$ more than once.  Indeed, some states get duplicated in the split (if some $\mathsf{a}$ belongs to more than one element of $\theta(v)$), which has no negative impact in a non probabilistic HA, since duplicated transitions define bisimilar systems. However, in the probabilistic case, we must make sure that we do not give to both copies the probability value that was meant for one copy: by duplicating the value, we would lose the correspondence between copies and the original mode and we could also end up with a weight of more than one: this would result in a function that is not a distribution.  We chose to put the probability on one of the duplicates because silent transitions make sure that the behavior is  preserved.  We could also have chosen to spread the probability to all duplicates uniformly.
\end{remark}

    \begin{example}
    \joo{ A linear phase-portrait approximation of the probabilistic thermostat automaton is obtained by slightly modifying the HA of Figure~\ref{aut_Thermostat_split} and is illustrated in Figure~\ref{aut_Thermostat_Prob_PA}.  All transitions get probability $1$ except for $\mathit{on}$-transitions from $(\mathit{OFF},1)$ which have probability $0.9$ to $(\mathit{ON},1)$ and $0.1$ to $(\mathit{DOWN},1)$.  Had we not restricted the postcondition of $\mu_\theta((\mathit{ON},j),-)$ to $\overline{\inv}_j^{\mathit{ON}}$, it would give probability one to $(\mathit{ON},2)$ and hence the distribution $\mu_\theta\in \prob((\mathit{OFF},1),\mathit{on})$ would sum up to $2$.}
\begin{figure}

  \begin{center}
 \noindent \!\!\!\!\!\!\begin{tikzpicture}[->,>=stealth',shorten >=0.1pt,auto,node distance=3cm,
                    semithick]
  \tikzstyle{every state}=[rectangle, fill=none,draw=black,text=black]
  \node[state, draw=none] (R) {};
  \node[state,  node distance=14mm] (A)  [right of=R]                   {\begin{minipage}{1.09cm}  \begin{center}\scriptsize{\emph{(\textbf{ON},1)}\\[1pt]  ${3 \! \leq \! \dot{x}\! \leq \! 4}$\\[1pt]  ${1 \!\leq x\! \leq  2}$}\end{center}
\end{minipage}};

 \node[state, node distance=3cm] (D)  [below of=A]                   {\begin{minipage}{1.09cm}  \begin{center} \scriptsize{\emph{(\textbf{ON},2)}\\  ${2 \! \leq \! \dot{x}\! \leq \! 3}$\\ ${2 \!\leq x\! \leq  3}$}\end{center}
\end{minipage}};

  \node[state]         (B) [right  of=A]      {\begin{minipage}{1.28cm}\begin{center}  \scriptsize{\emph{(\textbf{OFF},1)}\\${\!-2\! \leq \! \dot{x}\!\leq\!-1\!}$\\ ${1 \!\leq x\! \leq\!  2}$}
\end{center}\end{minipage}};

 \node[state, node distance=3cm]         (E) [below  of=B]      {\begin{minipage}{1.28cm}\begin{center}  \scriptsize{\emph{(\textbf{OFF},2)}\\${\!-3\!\leq\!\dot{x}\!\leq\!-\!2\!}$\\ ${2 \!\leq x\! \leq\!  3}$}
\end{center}\end{minipage}};

  \node[state]         (C) [right  of=B]      {\begin{minipage}{1.23cm}\begin{center}  \scriptsize{\emph{(\textbf{DOWN},1)}\\$\dot{x}\!=\!0$\\ $\!x\!=\!0$}
\end{center}\end{minipage}};

  \path
  (R) edge    node[pos=-0.5,above] {\scriptsize{$\!x=\!2$}}(A)
  (D) edge     [right]
  node[pos=0.2, below , sloped ] {\scriptsize{$\!x\!=\!3$}}
  node[pos=0.9, below , sloped ] { 
  }
  node[pos=0.5, above , sloped ] {\scriptsize{\emph{off}}}
  node[pos=0.6, below , sloped ] {\tiny{1}}(E)
  (B) edge     [ right]
  node[pos=0.2, below , sloped ] {\scriptsize{$\!x\!=\!1$}}
  node[pos=0.9, below , sloped ] {  
  }
   node[pos=0.5, above , sloped ] {\scriptsize{\emph{on}}}
   node[pos=0.6, below , sloped ] {\tiny{0.9}}(A)
  (B)  edge[in=160,out=0, bend left]
  node[pos=0.2, below ] {\scriptsize{$\!x\!=\!1$}}
     node[pos=0.6, below , sloped ] {\tiny{0.1}} node[pos=0.95, below , sloped ] {\scriptsize{$\{0\}$}} node[pos=0.5, above , sloped ]{\scriptsize{\emph{on}}}(C)

   (D) edge    [in=250,out=110]
     node[pos=0.9, below , left ] {
  }
      node[pos=0.5, above , left ] {\scriptsize{\emph{$\tau$}}}  node[pos=0.5, below , left ] {}(A)
   (A) edge      [in=70,out=290]
     node[pos=0.9, below , right ] {
  }
       node[pos=0.5, above,right  ] {\scriptsize{\emph{$\tau$}}}  node[pos=0.5, below , sloped ] {}(D)

   (E) edge    [in=250,out=110]
       node[pos=0.9, below , left ] {
  }
      node[pos=0.5, above , left ] {\scriptsize{\emph{$\tau$}}} node[pos=0.5, below , sloped ] {}(B)

    (B) edge     [in=70,out=290]
       node[pos=0.9, below , sloped ] {
  }
      node[pos=0.5, above,right  ] {\scriptsize{\emph{$\tau$}}} node[pos=0.5, below , sloped ] {}(E)
  ;

     \end{tikzpicture}
\end{center}
  \caption{A phase-portrait approximation of the probabilistic thermostat}\label{aut_Thermostat_Prob_PA}
\end{figure}
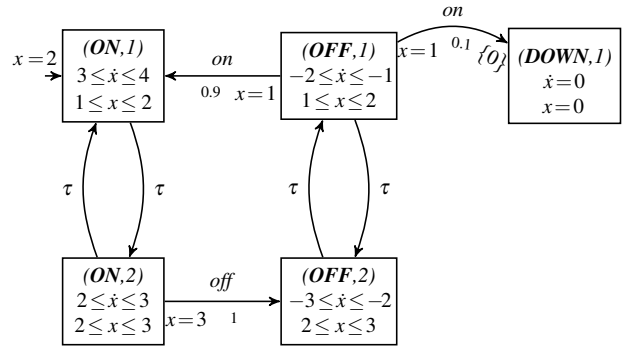
    \end{example}

    }

 We should now prove that the construction yields a valid PHA: this will be a consequence of the following theorem.
      \begin{theorem}
       \label{theorem_simulation_appr} Any linear phase-approximation $H_\theta$ of a
       probabilistic PHA $H_p$ simulates it: i.e., $H_\theta\preceq H_p$.        The split of a  PHA is bisimilar to it.
       \end{theorem}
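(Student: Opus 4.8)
The plan is to lift the non-probabilistic result, Theorem~\ref{theorem_approximation}, to the probabilistic setting in exactly the way the proof of Theorem~\ref{theorem_bissim_prob} lifted Theorem~\ref{theorem_bis_trans}. Write $H$ for the underlying non-probabilistic automaton of $H_p$ (Step~1), $T$ for its split / linear phase-portrait approximation (Step~2), and $T'$ for the probabilistic automaton built in Step~3; this $T'$ is the approximation $H_\theta$ of the statement. Since the approximation introduces no new variable, states of $H_p$ and of $T'$ carry the same valuations, and I would take the relation $\mathcal{R}\subseteq S_{H_p}\times S_{T'}$ defined by $(v,\mathsf{a})\,\mathcal{R}\,((v,i),\mathsf{a})$ iff $\mathsf{a}\in\inv^v_i$, i.e.\ the identity on valuations paired with the mode duplication induced by $\theta$. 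Because $\{\inv^v_i\}_i$ covers $\Inv(v)\supseteq\Init(v)$ and $\Init_\theta(v,i)\supseteq\Init(v)\cap\inv^v_i$, every initial state $(v,\mathsf{a})$ of $H_p$ is related to some initial state $((v,i),\mathsf{a})$ of $T'$, so the initial clause holds at once. I would then check that $\mathcal{R}$ witnesses that $T'$ simulates $H_p$, and is a bisimulation in the pure-split case.

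For flow transitions I would appeal directly to Theorem~\ref{theorem_approximation}. The flows and invariants of $T'$ and of $H_p$ are inherited verbatim from $T$ and $H$, the probabilistic data being irrelevant to timed steps; hence a weak timed step $(v,\mathsf{a})\overset{\sigma}{\twoheadrightarrow}(v,\mathsf{a}')$ of $H_p$ is matched in $T'$ by the weak timed step the non-probabilistic simulation supplies, the silent transitions between copies of $v$ carrying the trajectory across the overlaps of $\theta(v)$ precisely as in the classical proof. The same argument, with $\Flow_\theta=\Flow$, discharges the flow clause of bisimulation for the pure split.

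The heart of the argument is the discrete clause, which I would model on the displayed computation of Theorem~\ref{theorem_bissim_prob}. Let $(v,\mathsf{a})\overset{a}{\rightarrow}\mu_{\mathsf{a}}^{\langle\d_i\rangle}$ be a transition of $H_p$ arising from $\mu\in\prob(v,a)$ with $\mathsf{a}\in\pre(\mu)$. Using $\mathsf{a}\in\inv^v_i$ and $\Pre(\mu_\theta)=\Pre_\theta(e_\theta)\cap\inv^v_i$ with $\Pre_\theta(e_\theta)\supseteq\pre(\mu)$, the companion distribution $\mu_\theta\in\Prob((v,i),a)$ is enabled at $((v,i),\mathsf{a})$, giving $((v,i),\mathsf{a})\overset{a}{\rightarrow}(\mu_\theta)_{\mathsf{a}}^{\langle\d_i\rangle}$. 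Now $\mu_\theta$ copies the weights of $\mu$ verbatim, $\mu_\theta((v',j),\post)=\mu(v',\post)$, while the postcondition $\Pos(\mu_\theta,(v',j))=\overline{\inv}^{v'}_j\cap\pos(\mu,v')$ sends each admissible arrival valuation to the \emph{unique} copy $((v',j),\mathsf{a}')$ with $\mathsf{a}'\in\overline{\inv}^{v'}_j$, since $\{\overline{\inv}^{v'}_j\}_j$ partitions $\Inv(v')$. Summing the effective transition probability over this support — the same manipulation as in Theorem~\ref{theorem_bissim_prob}, where the sum over copies collapses to one term because the arrival valuation lies in exactly one partition piece — yields $(\mu_\theta)_{\mathsf{a}}^{\langle\d_i\rangle}\bigl(\mathcal{R}(\{(v',\mathsf{a}')\})\bigr)=\mu_{\mathsf{a}}^{\langle\d_i\rangle}(v',\mathsf{a}')$ for every arrival state, whence $\mu_1(X)\le\mu_2(\mathcal{R}(X))$ for all $X$, with equality when no weakening is applied.

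I expect the main obstacle to be precisely the weight bookkeeping flagged in the Remark: showing that $\overline{\inv}^{v'}_j$ routes each admissible target to one and only one copy, so that $(\mu_\theta)_{\mathsf{a}}^{\langle\d_i\rangle}$ neither loses mass (when a pair is invalidated by the splitting of a transition) nor duplicates it on overlapping cover elements, and checking that the star-notation valuations $\d_i[\mathsf{a}]$, whose values depend on the source $\mathsf{a}$, are filtered consistently by $\pos(\mu,v')$. Once the weights are seen to transfer without loss or duplication, taking $X=S_{H_p}$ in the computation also shows each $(\mu_\theta)_{\mathsf{a}}^{\langle\d_i\rangle}$ sums to one, establishing the promised corollary that $T'$ is a genuine PHA. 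Finally, the second assertion follows by specialising to $\Flow_\theta=\Flow$, $\Inv_\theta(v,i)=\inv^v_i$ and exact pre/post conditions: there the simulation inequality becomes the equality $\mu_1(X)=\mu_2(X)$ on the classes $\{(v,\mathsf{a})\}\cup\{((v,i),\mathsf{a})\mid\mathsf{a}\in\inv^v_i\}$, so $\mathcal{R}$ is a bisimulation and the split is bisimilar to $H_p$; the approximation then simulates the split by weakening alone, and simulates $H_p$ by transitivity of $\preceq$.
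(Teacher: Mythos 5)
Your proposal is correct and follows essentially the same route as the paper: the same relation pairing $(v,\mathsf{a})$ with its copies $((v,i),\mathsf{a})$, the same collapse of the sum over copies via the partition $\overline{\inv}^{v'}_j$ to show the transition weights transfer exactly, and the same observation that only the weakened flow transitions prevent the approximation (as opposed to the pure split) from being a bisimulation. The extra detail you supply on initial states and on delegating the flow clause to the non-probabilistic theorem is consistent with, and slightly more explicit than, the paper's argument.
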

      \begin{proof}{}
Let $R$ be the relation that relates any $(v,\mathsf{a})\in S_{H_p}$ to every $((v,i),\mathsf{a})$, with  $1\leq i\leq|\theta(v)|$.  Let $(v,\mathsf{a})\in S_{H_p}$, $a\in\Act$, $1\leq i\leq|\theta(v)|$ and $\mu\in \prob(v,a)$.  We have to prove that for all  $\mathsf{a}\in \Pre(\mu_\theta)$ and any combination ${\langle \d_j\rangle}$ from  $\mu_\theta$,  we have $\mu_\mathsf{a}^{\langle \d_j\rangle}(U)\leq \mu_{\theta\mathsf{a}}^{\langle \d_j\rangle}(R(U))$  for all  $U \subseteq S_{H_p}$.  In fact, we show equality. This does not give us a bisimulation because of the flow transitions which only satisfy simulation. The  second equality below relies on the $\overline{\inv}_k^{v'}$'s being disjoint and cumulating to $\inv(v)$:  indeed, for each $k$, there is only one $\mathsf{a}'\in \overline{\inv}_k^{v'}$ and all $\mathsf{a}\in\inv(v)$ is in one of those.  Consequently the summation over $k$ can be inserted freely.
\begin{eqnarray*}
 \mu_\mathsf{a}^{\langle \d_j\rangle}(U)&=&\!\!\!\!\!\!\sum_{\substack{(v',\mathsf{a}')\in  U}}
\sum_{{  j=1}}^m\,\{\,\mu(v_j,\post_j)\mid v_j=v',\ \d_j[\mathsf{a}]=\mathsf{a}'\} \ where\ m=|\supp(\mu)|\\[-2pt]
&=&\!\!\!\!\!\!\!\!\sum_{\substack{(v',\mathsf{a}')\in  U}}\,\sum^{|\theta(v')|}_{\substack{k=1\\[-2pt]\mathsf{a}'\in\overline{\inv}_k^{v'}}  }\,\sum_{{  j=1}}^m\,
\{\,\mu(v_j,\post_j)\mid v_i=v',\ \d_j[\mathsf{a}]=\mathsf{a}'\}\\[-2pt]
&=&\!\!\!\!\!\!\!\!\sum_{\substack{(v',\mathsf{a}')\in  U}}\,\!\!\sum^{|\theta(v')|}_{\substack{k=1\\[-2pt]\mathsf{a}'\in{\inv}_k^{v'}}  }\,\!\!\sum_{{  j=1}}^m\,
\{\,\mu_\theta((v_j,k),\post_j)\mid\! v_i=v',\! \d_j[\mathsf{a}]=\mathsf{a}'\}\\[-2pt]
&=&\!\!\!\!\sum_{\substack{(v',\mathsf{a}')\in  U}}\!\!\sum^{|\theta(v')|}_{\substack{k=1\\\mathsf{a}'\in\inv_k^{v'}}  }\!\!\mu_{\theta\mathsf{a}}^{\langle \d_k\rangle} ((v',k),\mathsf{a}')
\;=\;
\mu_{\theta\mathsf{a}}^{\langle \d_j\rangle}(R(U))
\end{eqnarray*}
This completes the proof of the first claim.
The second claim follows easily.     \end{proof}

      \begin{example}
       The  linear phase-portrait approximation of the probabilistic thermostat automaton of Figure~\ref{aut_Thermostat_Prob_PA} simulates the thermostat automaton of Figure~\ref{aut_Thermostat_Prob1}. In particular, consider the states $s_1=(\mathit{OFF}, 1)$ and $s_2=(\mathit{ON},1)$ of the original thermostat automaton, such that $\mu(\mathit{ON}, 1)=0.9$ for $\mu \in \prob(\mathit{OFF}, \mathit{on})$. The states (($\mathit{OFF}$,1), 1) and (($\mathit{ON}$,1),1) simulate respectively $s_1$ and $s_2$ since $\mu_\theta((\mathit{ON},1), 1)=0.9$ for $\mu_\theta \in \Prob((\mathit{OFF},1),\mathit{on})$.
      \end{example}

   \begin{corollary}
Phase-portrait approximation and splitting of  fi\-nitely branching PHAs are~PHAs.
   \end{corollary}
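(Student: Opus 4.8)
The plan is to mirror the proof of the clock-translation corollary. Since the non-probabilistic skeleton of the result is, by construction, exactly the split/approximation $H_\theta$ of the underlying HA $H$ (and hence already a bona fide hybrid automaton), the only things left to check are that the probabilistic data of $T'$ are well-formed: that $\Prob$ returns \emph{finite} sets of genuine distributions, that the condition imposed on $\tau$-distributions in Definition~\ref{defSHP} holds, and that the finite-branching conditions survive. First I would dispatch the silent distributions: each $\mu_\tau^{j}$ is concentrated on the single pair $((v,j),\{*\}^X)$ with mass $1$, so it is trivially a distribution, it is concentrated on one pair of the form $(v,\{\mathsf{d}\})$ as the $\tau$-requirement demands, and it is finitely branching, its support being a singleton whose set component $\{*\}^X$ is a single valuation.

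The real work is to show that each $\mu_\theta$ is a distribution, i.e.\ that it sums to at most $1$. This is the delicate point, because $\mu_\theta$ assigns $\mu(v',\post)$ to $((v',j),\post)$ for \emph{every} copy $(v',j)$ of the target mode $v'$, so the raw weights are duplicated across copies; the cancellation of this duplication is precisely what the partition $\overline{\inv}^{v'}_j$ of $\Inv(v')$ buys us, since each realizable arrival valuation is retained by exactly one copy through its postcondition. Rather than re-run this bookkeeping, I would invoke Theorem~\ref{theorem_simulation_appr} directly, exactly as the clock-translation corollary invokes Theorem~\ref{theorem_bissim_prob}. That proof establishes $\mu_\mathsf{a}^{\langle \d_j\rangle}(U)=\mu_{\theta\mathsf{a}}^{\langle \d_k\rangle}(R(U))$ for every $U\subseteq S_{H_p}$; taking $U=S_{H_p}$ and using that $R(S_{H_p})=S_{H_\theta}$ (every copy-state is a copy of an original state, because the cover covers $\Inv(v)$) yields $\mu_{\theta\mathsf{a}}^{\langle \d_k\rangle}(S_{H_\theta})=\mu_\mathsf{a}^{\langle \d_j\rangle}(S_{H_p})\leq 1$, the last inequality holding because $\mu$ is a distribution in the valid PHA $H_p$. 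This is exactly the statement that the distributions of $T'$ have total mass at most one.

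Finite branching is then routine: $\supp(\mu_\theta)$ embeds into $\supp(\mu)\times\{1,\dots,|\theta(v')|\}$, which is finite because $H_p$ is finitely branching and each $\theta(v')$ is a finite cover, and every set $\post$ occurring in $\supp(\mu_\theta)$ is inherited verbatim from $\supp(\mu)$, hence finite. The claim about splitting is the same construction with the flow approximation omitted, so the identical argument applies (and indeed already appears as the second, easy claim of Theorem~\ref{theorem_simulation_appr}). I expect the only real obstacle to be conceptual rather than computational: convincing the reader that the apparent over-counting of probability across duplicated copies is genuinely cancelled by the $\overline{\inv}^{v'}_j$ restriction. But that cancellation is exactly the content of the ``the summation over $k$ can be inserted freely'' step in the proof of Theorem~\ref{theorem_simulation_appr}, so once that theorem is in hand the corollary follows with essentially no further calculation.
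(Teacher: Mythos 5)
Your proof is correct and follows essentially the same route as the paper: the paper's own argument is precisely to take $U$ to be the whole state space in the computation from Theorem~\ref{theorem_simulation_appr}, conclude $\mu_{\mathsf{a}}(S_H)=\mu_{\theta\mathsf{a}}(S_\theta)\leq 1$ from $\mu$ being a distribution, and observe that the same reasoning covers splitting since the flow evolution plays no role. Your additional explicit checks of the $\tau$-distributions and of finite branching are sound and slightly more thorough than the paper, which leaves them implicit.
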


   \begin{proof}{}
   \label{preuv_valid_p2}
We only need to prove that every defined $\mu_{\theta\mathsf{a}}$ is a distribution, that is, the total probability out of $\mu_{\theta\mathsf{a}}$ is 1.   This is guaranteed by $\mu_{\mathsf{a}}$ being a distribution and by taking  $U:=S_\theta$ in  the proof of Theorem~\ref{theorem_simulation_appr}; one obtains that $\mu_{\mathsf{a}}(S_H)=\mu_{\theta\mathsf{a}}(S_\theta).$  Since the distribution has nothing to do with the flow evolution, it is the same argument for both cases of approximation and splitting.
  \end{proof}

      Since an approximation of a probabilistic hybrid automaton
       is a rectangular PHA, its model-checking is decidable \cite{Spr01}. Therefore, by taking the right approximation to it, any probabilistic hybrid automaton can be verified.

\section{Conclusion}

     In this paper, we proved that the two methods of Henzinger et al.~\cite{Sim90}, clock-translation and linear phase-portrait approximation, can also be applied in the probabilistic context to verify non-rectangular PHAs.   \joo{The adaptation of the methods to PHAs were facilitated by a modification of the syntax over PHAs: a star notation to represent stability of a variable after a transition and postcondition functions.  The advantage of adapting the methods instead of defining them from scratch is mainly that proving bi/simulation had only to be checked for discrete  transitions.}  The correctness of the constructions is ensured by bi/simulation relations.

     The first method, when it is applicable, results in a probabilistic timed automaton which satisfies exactly the same properties as the original PHA. The inconvenience of this method is that it requires, as in the non-probabilistic case, that  the non-rectangular variables be solvable: all the equations induced by the flow evolution have solutions in $\mathbb{R}$. 

     For linear phase-portrait approximation, there is no restriction on the PHA, and its application results in a rectangular PHA that simulates the original one.
     When a safety property is satisfied by the rectangular approximation, we can assert that the original PHA satisfies the same property. However, in the case when  a safety property is not satisfied by the approximation, more splits should be done; this could be costly in time depending on the property and the size of the PHA.

Side contributions of this paper are also: new additions to the definition of PHA that make them more general;
 a splitting construction on PHAs that results in a bisimilar PHA;
 a simpler description of the two  techniques than what can be found in the original papers~\cite{Sim90,Sim92}.  About the additions to the definition of PHAs, it is interesting to note that the star notation and the postcondition function on distributions permit to represent constraints on variables in terms of set of valuations instead of predicates.  When working with distributions, set of valuations are more natural than predicates.

Let us discuss how our approximation relates to the construction of a recent paper by Zhang et al~\cite{cav2010}, which also defines approximations for probabilistic hybrid automata.  That paper also gives a method to over approximate the original automaton.  It is clear that the latter is less abstract than the former since Zhang et al.\ define a \emph{finite} approximation.  As in the construction of Henzinger et al.~\cite{Sim90}, they start from a cover of the state space and abstract according to it.  The difference is that the cover is over states instead of over the variables' space of values.  More importantly, whereas we use the cover to ``linearize" each piece that the cover defines, they group together all these states into one single state.  The result is a finite probabilistic transition system, whereas we obtain a PHA. They prove, as we do, that their abstraction simulates the original PHA.  However, their approximation is more abstract, which has the advantage of being smaller but of course with less information.

Future work includes the implementations of the technique into a probabilistic model checker and taking advantage of other approximation techniques that have been developed for probabilistic systems in order to widen the class of PHAs for which model-checking is supported.

\subsection*{Acknowledgement}
J. Desharnais wishes to thank Marta Kwiatkowska and the Computing Laboratory of Oxford University for welcoming her during year 2009-2010.

\bibliographystyle{plain}

\begin{thebibliography}{10}

\providecommand{\urlalt}[2]{\href{#1}{#2}}
\providecommand{\doi}[1]{doi:\urlalt{http://dx.doi.org/#1}{#1}}

\bibitem{alur2000}
Rajeev Alur, Thomas A. Henzinger, Gerardo Lafferriere and George~J. Pappas,
  \emph{Discrete abstractions of hybrid systems}, Proc. IEEE,
  2000, pp.~971--984. \doi{10.1109/5.871304}

\bibitem{Baier96b}
Christel Baier, \emph{Polynomial time algorithms for testing probabilistic
  bisimulation and simulation}, Proc. 8th International Conference
  on Computer Aided Verification (CAV'96), LNCS,
  vol. 1102,  Springer-Verlag, 1996, pp.~38--49. \doi{10.1007/3-540-61474-5\_57}

\bibitem{DesLavTra08}
Jos\'ee Desharnais, Fran\c{c}ois Laviolette, and Mathieu Tracol,
  \emph{Approximate analysis of probabilistic processes: logic, simulation and
  games}, Proc. 5th International Conference on Quantitative Evaluation of Systems 
  (QEST '08), IEEE Press, 2008, pp.~264--273. \doi{10.1109/QEST.2008.42}

\bibitem{Sim90}
Thomas~A. Henzinger, Pei-Hsin Ho, and Howard Wong-toi, \emph{Algorithmic
  analysis of nonlinear hybrid systems}, IEEE Transactions on Automatic Control
  \textbf{43} (1996), 225--238. \doi{10.1109/9.664156}

\bibitem{Sim92}
Thomas~A. Henzinger, Pei-Hsin Ho, and Howard Wong-Toi, \emph{Hytech: A model
  checker for hybrid systems},  Int. Journal Software Tools for Technology Transfer (STTT)
  \textbf{1} (1997), no~1--2, 110--122. \doi{10.1007/s100090050008}

\bibitem{Hen99}
Thomas~A. Henzinger and Peter~W. Kopke, \emph{Discrete-time control for
  rectangular hybrid automata}, Theor. Comput. Sci. \textbf{221} (1999),
  no.~1-2, 369--392. \doi{10.1016/S0304-3975(99)00038-9}

\bibitem{Isa06}
Isabella Kotini and George Hassapis, \emph{Verification of rectangular hybrid
  automata models}, Journal of Systems and Software \textbf{79} (2006), no.~10,
  1433--1443. \doi{10.1016/j.jss.2006.01.016}

\bibitem{Spr04}
M.~Kwiatkowska, G.~Norman, J.~Sproston, and F.~Wang, \emph{Symbolic model
  checking for probabilistic timed automata}, Information and Computation
  \textbf{205} (2007), no~7, 1027--1077. \doi{10.1016/j.ic.2007.01.004}

\bibitem{Kop96}
Jr. Peter William~Kopke, \emph{The theory of rectangular hybrid automata},
  Ph.D. thesis, Cornell University, Faculty of the Graduate School of Cornell
  University, 1996.

\bibitem{Spr02}
Jeremy Sproston, \emph{Analyzing subclasses of probabilistic hybrid automata.},
  Proc. 2nd International Workshop on Probabilistic Methods in
  Verification, Eindhoven, University of Birmingham, Technical Report, CS-99-8,
  August 1999.

\bibitem{Spr01}
Jeremy Sproston, \emph{Decidable model checking of probabilistic hybrid
  automata}, Proc. 6th International Symposium on
  Formal Techniques in Real-Time and Fault-Tolerant Systems (FTRTFT '00), LNCS, vol. 1926,
  Springer-Verlag, 2000, pp.~31--45. \doi{10.1007/3-540-45352-0}

\bibitem{Spr00}
Jeremy Sproston, \emph{Model checking of probabilistic timed and hybrid
  systems}, Ph.D. thesis, University of Birmingham, Faculty of Science, 2000.

\bibitem{cav2010}
Lijun Zhang, Zhikun She, Stefan Ratschan, Holger Hermanns, and Ernst~Moritz
  Hahn, \emph{Safety verification for probabilistic hybrid systems}, Proc. 
  22nd International Conference
  on Computer Aided Verification (CAV'10), LNCS,
  vol. 6174,  Springer-Verlag, 1996, pp.~196--211. \doi{10.1007/978-3-642-14295-6\_21}

\bibitem{Lafferriere}
G.~Lafferriere, G.~Pappas, and S.~Yovine. \emph{A new class of decidable hybrid systems}, 
Proc. 2nd International Workshop Hybrid Systems: Computation and Control
(HSCC'99), LNCS, vol. 1569,  Springer-Verlag, 1999, pp.~137--151. \doi{10.1007/3-540-48983-5}
\end{thebibliography}

\end{document}